\DeclarePairedDelimiter\ceil{\lceil}{\rceil}
\DeclarePairedDelimiter\floor{\lfloor}{\rfloor}
\newtheorem{corollary}{Corollary}
\newtheorem{lemma}{Lemma}
\begin{document}

\title{Trimmed Minimum Error Entropy for Robust Online Regression}

\author{Sajjad~Bahrami,~\IEEEmembership{Student Member,~IEEE,}
        and~Ertem~Tuncel,~\IEEEmembership{Senior~Member,~IEEE,}
        \\ {\tt\small sbahr003@ucr.edu}  {\tt\small and ertem@ece.ucr.edu}
\thanks{ The material of this paper was presented in part at the 2020 IEEE
International Symposium on Information Theory.}}

\maketitle

\begin{abstract}
In this paper, online linear regression in environments corrupted by non-Gaussian noise (especially heavy-tailed noise) is addressed. In such environments, the error between the system output and the label also does not follow a Gaussian distribution and there might exist abnormally large error samples (or outliers) which mislead the learning process. The main challenge is how to keep the supervised learning problem least affected by these unwanted and misleading outliers. In recent years, an information theoretic algorithm based on Renyi's entropy, called minimum error entropy (MEE), has been employed to take on this issue. However, this minimization might not result in a desired estimator inasmuch as entropy is shift-invariant, i.e., by minimizing the error entropy, error samples may not be necessarily concentrated around zero. In this paper, a quantization technique is proposed by which not only aforementioned need of setting errors around the origin in MEE is addressed, but also major outliers  are rejected from MEE-based learning and MEE performance is improved from convergence rate, steady state misalignment, and testing error points of view. 
\end{abstract}
\begin{IEEEkeywords}
Online linear regression, linear adaptive filtering, non-Gaussian noise, heavy-tailed noise, information theoretic learning, minimum error entropy, robust learning, outlier rejection.
\end{IEEEkeywords}
\section{Introduction}

\IEEEPARstart{I}{n} many real-world signal processing and machine learning applications, especially nonlinear topologies, we encounter non-Gaussian probability density functions (PDFs). In such realistic scenarios, the noise might have a heavy-tailed distribution, there might exist severe outlier noise, and the error PDF might even change in time \cite{1}. As a real-world example, we can point at underwater communications in which the Gaussian assumption cannot be made anymore due to the existence of impulsive noise \cite{2,3,4,5}.

Although even up to now, the most commonly used cost functions are based on the moments of the data \cite{6}, e.g., variance, skewness, and kurtosis, which are the 2nd, 3rd, and 4th central moments, respectively, when the error is non-Gaussian, these cost functions are not reliable and we need to take into account higher order statistics of data as well \cite{1}. Therefore, we need to look for other general and robust descriptors of the data statistics that improve the algorithm performance.

In recent years, two effective cost functions, namely entropy and correntropy, have been employed by information theoretic learning in non-Gaussian environments as superior alternatives to the famous and most commonly used cost function, i.e., mean square error (MSE) \cite{7,8}. The relation between algorithms based on entropy and correntropy has been investigated in \cite{8.5}. In the literature, entropy and correntropy are sometimes interpreted as counterparts of variance and correlation, respectively \cite{1}. Both entropy and correntropy involve higher-order data statistics and therefore they are expected to outperform MSE, which only contains second-order moment.

Specifically, entropy as a robust information theoretic cost function contains all higher-order moments and although the algorithm based on error entropy is computationally more expensive than error correntropy, entropy is a more general descriptor of the underlying error statistics \cite{10}. 
Moreover, previous work have tackled computational bottleneck of entropy approximation in large-scale data sets, e.g., \cite{12} and \cite{10}, where fast Gauss transform and quantization were employed, respectively, to reduce computational complexity. Some results regarding consistency, robustness, uniqueness of the solution, sufficient and necessary conditions for MEE algorithm can be found in \cite{12.1} and \cite{12.2}. Moreover, authors in \cite{12.2.1} show that even when large outliers exist in both input and output variables, MEE can result in a very close solution to the optimum value. Some applications of entropy minimization in adaptive system training, neural networks, blind deconvolution, parameter estimation, blind source separation, digital communication channel equalization, and channel estimation for massive multiple input multiple output (MIMO) communication can be found in \cite{7,13}, \cite{14}, \cite{15}, \cite{16}, \cite{17}, \cite{18} and  \cite{18.1}. Bayesian estimation based on MEE is also addressed in \cite{12.3} and \cite{12.4}.

Outlier effect is one of the main challenges that we must deal with when we work with non-Gaussian (multimodal, impulsive, heavy-tailed, etc.) environments. Although an outlier may have either a negative or positive role \cite{19}, outliers in our problem are not informative and arose from non-Gaussian measurement noise. Throughout the paper, we mitigate this outlier effect in a non-Gaussian environment. We consider an online linear regression problem in which we receive new data samples at each time instant and use it to update the parameters of the underlying system. Adaptive noise cancellation, system identification, channel estimation, etc. \cite{19.1} are some examples of online linear regression applications. Our goal is to minimize the error between linear system output and labels even when environment is severely affected by outliers. To this end, we employ entropy. As entropy denotes the average dispersion of data, we minimize it to concentrate the errors \cite{20}. In other words, when we use error entropy as cost function and minimize it we indeed attempt to ideally set the distribution of error as an impulse. However, as error entropy minimization is shift-invariant \cite{12.2,12.3}, we must take some further steps to concentrate errors specifically around $e=0$.

In this paper, we employ a nonlinear quantization technique to detect major outliers in error samples. Then, we mitigate their destructive effect in the learning by not considering them in both processes of MEE-based learning and concentrating error samples around $e=0$. This new algorithm results in faster convergence to a lower steady state misalignment and achieves a smaller testing error. 

The rest of the paper is organized as follows. In Section II, first we review MEE and maximum correntropy criterion (MCC) and compare them, then we discuss different methods for concentrating error samples around the origin in MEE. Section III is devoted to proposing our outlier detection technique based on nonlinear quantization and deploying it into the MEE to improve its performance. In section IV, simulation results are presented. Ultimately, we conclude the paper in section V.   

\textit{Notation:} Random variables and their realization are shown by uppercase and lowercase letters, respectively while vectors are denoted by boldface letters. Moreover, $\textbf{E}\{.\}$ stands for expectation operator and $\parallel .\parallel $ denotes 2 norm. 
\begin{figure}
\centering
\includegraphics[scale=.53]{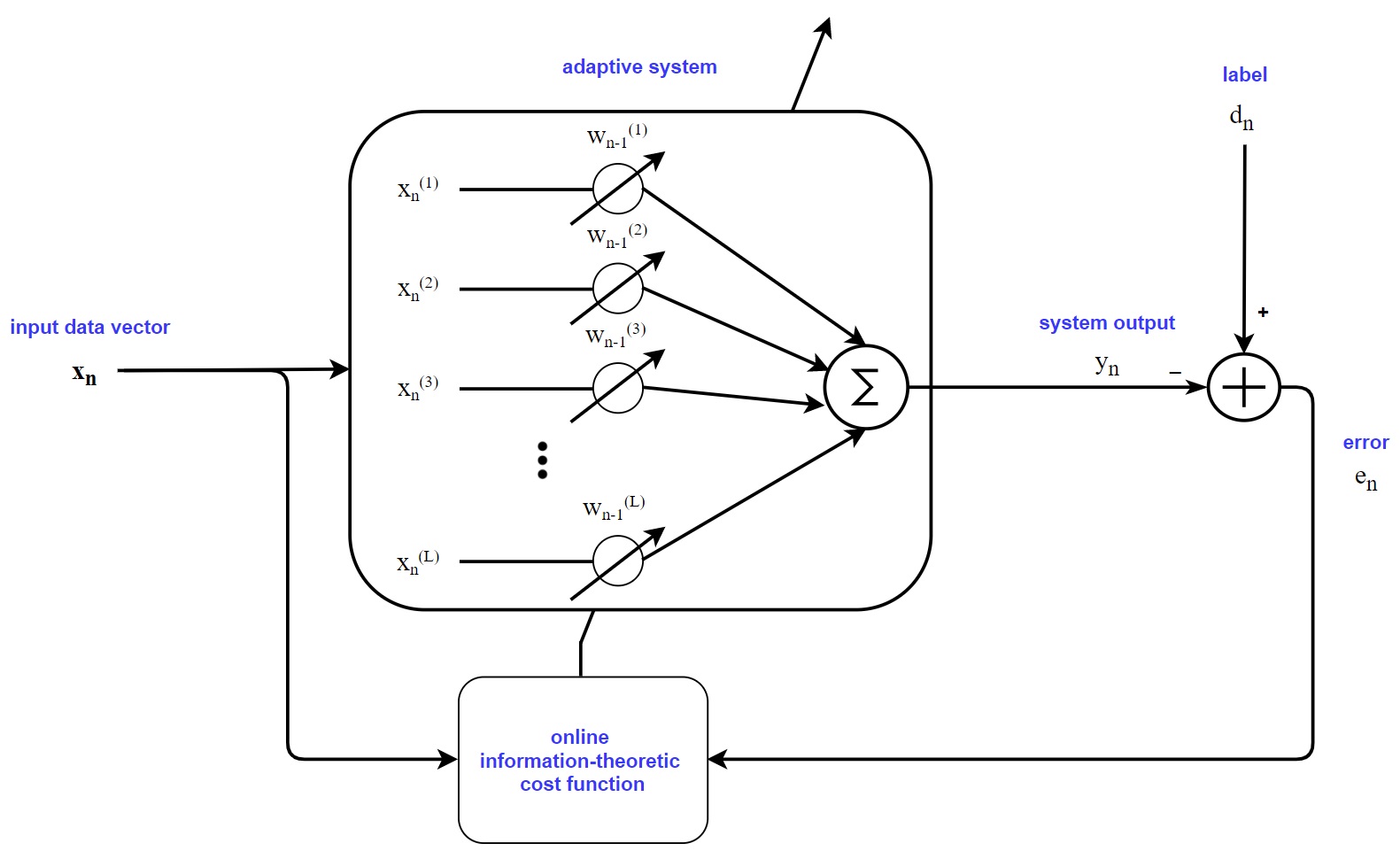}
\caption{Online linear regression based on information-theoretic cost functions.}
\label{fig1}
\end{figure}

\section{MEE vs. MCC}
We start with restating the online linear regression (or linear adaptive filtering) problem illustrated in Figure \ref{fig1}. The goal is to find the parameters of a linear system $y_n=\textbf{x}_n^T\textbf{w}_{n-1}$ in which $\textbf{x}_n=\big[ x_n^{(1)},x_n^{(2)},\cdots ,x_n^{(L)}\big] ^T$ is the input vector at time instant $n$, $\textbf{w}_{n-1}=\big[ w_{n-1}^{(1)},w_{n-1}^{(2)},\cdots ,w_{n-1}^{(L)}\big] ^T$ is a vector denoting system parameters estimated at time instant $n-1$, $L$ indicates the length of the parameter vector and $y_n$ denotes system output estimated at time instant $n$. Error samples at time $n$ are obtained as $e_i=d_i-\textbf{x}_i^T\textbf{w}_{n-1}$ where $d_i$ denotes label at time instant $i$. We want to concentrate error samples around $e=0$ with time. 

Both MEE and MCC are deployed in information-theoretic learning context as robust criteria to deal with non-Gaussianity of the environment which becomes problematic if conventional MSE is used. In other words, superiority of information-theoretic cost functions becomes more clear when the error distribution is non-Gaussian and this happens if the filter topology is nonlinear (which is not of interest in this paper) or the label (or noise in the label) is non-Gaussian (which we do address in this paper). In this paper, we specifically focus on a class of non-Gaussian distributions whose tails are heavier than an exponential. Sampling from such distributions results in mostly "normal" values with a few "abnormal" values (outliers). We can name  power-law, lognormal, Weibull, Cauchy, $\alpha $-stable and mixture of Gaussians as examples of such distributions. In order to deal with error with a heavy-tailed distribution we need to incorporate higher order moments of the error (which are usually large \cite{20.1}), while conventional MSE (i.e., $ J_{MSE}(E)=\textbf{E}\{E^2\} $) contains only second order moment of the error which results in small misalignment between exact solution and estimate when the noise is Gaussian. \cite{1}.

In the rest of this section, we first provide an overview of MCC and MEE, then we show how they contain higher order statistics of the error (and consequently how they are intuitively superior to MSE) and also we provide a brief comparison of them. Finally, we challenge existing solutions for concentrating error samples around $e=0$ in MEE.   \vspace{-.5cm}

\subsection{Overview of MCC and MEE}
For both MCC and MEE we need error PDF to calculate the information-theoretic costs while this PDF is unknown. Parzen window technique is used as a non-parametric method for error PDF estimation at time instant $n$ as follows \cite{20.2,23}: 
\begin{align}
\label{parzen}
p_E(e)\approx \frac{1}{N}\sum _{i=0}^{N-1}G_{\sigma}(e-e_{n-i})=\hat{p}_E^{(n)}(e)
\end{align}
in which $N$ denotes number of error samples used for this non-parametric PDF estimation and $G_{\sigma }(.)$ is the following Gaussian kernel with kernel bandwidth $\sigma $:
\begin{align}
G_{\sigma}\left( e\right) =\frac{1}{\sqrt{2\pi}\sigma}\exp(-\frac{\parallel e\parallel ^2}{2\sigma ^2}). \nonumber 
\end{align}
In the following we see how MCC is related to error PDF by (\ref{parzen}) while MEE directly uses (\ref{parzen}) for entropy estimation.
\subsubsection{Maximum Correntropy Criterion (MCC)}
Correntropy is defined as follows:
\begin{align}
\label{correntropy}
v\left( D,Y\right) =\textbf{E}\left \{ G_{\sigma}\left( D-Y\right) \right \}= \textbf{E}\left \{ G_{\sigma}\left( E\right) \right \}=v(E), 
\end{align}
in which $D$ and $Y$ are two random variables denoting label and system output, respectively, random variable $E=D-Y$ denotes the error, and $G_{\sigma }(.)$ is the same Gaussian kernel used in (\ref{parzen}) for error PDF estimation. Sample mean approximation of correntropy (\ref{correntropy}) at time instant $n$ from data samples $\left \{ d_{n-i},y_{n-i}\right \} (\mathrm{or~equivalently} \{ e_{n-i}\}), ~i=0,\cdots ,N-1,$ is as follows: 
\begin{align}
\label{corest}
\hat{v}^{(n)}\left( D,Y\right) =\frac{1}{N}\sum _{i=0}^{N-1}G_{\sigma}\left( d_{n-i}-y_{n-i}\right) =\frac{1}{N}\sum _{i=0}^{N-1}G_{\sigma}\left( e_{n-i}\right) . 
\end{align}
The above relation is indeed the online MCC cost function optimized using stochastic gradient ascent with a computational complexity of $O(N)$ in each iteration.

Obviously,  $\hat{v}^{(n)}\left( D,Y\right) =\hat{p}_E^{(n)}\left(0\right) $ is concluded from (\ref{parzen}) and (\ref{corest}) which means that if we maximize the estimate of correntropy we indeed maximize the estimate of error PDF at $0$. This is the reason that maximum correntropy criterion is meaningful. In addition, if we take a closer look at (\ref{corest}) we can see that correntropy is a similarity measure between two random variables $D$ and $Y$. In fact, abnormally large error samples (or outliers) are given small weights and are filtered out by Gaussian kernel while ones with smaller values have larger contribution in the learning process inasmuch as they are assigned larger weights.

As stated earlier, in non-Gaussian environments we need higher-order statistics \cite{1,20.1} and correntropy provides us with that. Using Taylor expansion of Gaussian kernel in (\ref{correntropy}), we can observe that correntropy contains other even-order moments of the error PDF (if they exist) as well:
\begin{align}
v\left( D,Y\right) =\frac{1}{\sqrt{2\pi}\sigma}\sum _{k=0}^{\infty}\frac{(-1)^k}{2^k\sigma ^{2k}k!}\textbf{E}\left \{ \left( D-Y\right)^{2k}\right \}. \nonumber
\end{align}
Note that we could use other kernels for MCC \cite{23.1,23.2} and even incorporate more higher-order moments.
\subsubsection{Minimum Error Entropy (MEE)}
Since this paper presents a modification to MEE, this subsection reviews MEE in more detail. We use Renyi's entropy in the sequel which is a parametric family of entropies (Shannon entropy is its limiting case) \cite{20}. For simplicity, we use Renyi's quadratic entropy whose estimation from samples of the underlying random variable has been well studied \cite{1}. Renyi's quadratic entropy is defined as follows:
\begin{align}
H_2(E)=-\log I_2(E), \nonumber 
\end{align}
where $I_2(E)=\textbf{\mathrm{E}} \{p_E(E)\} =\int p_E^2(e)de$ is called information potential and $p_E(e)$ denotes the error PDF. Since $\log (.)$ is a monotonically increasing function, the following optimization problems are the same:
\begin{align} 
\min _{\textbf{w}}H_2(E)=\max _{\textbf{w}}I_2(E),\nonumber 
\end{align}
and hence it suffices to maximize information potential. We need to estimate the information potential. However, similar to the case of MCC, we are dealing with an extremely large data-set where we are receiving continuously new data (error) samples with time, and so it is not efficient to use a batch estimator to incorporate all data samples for information potential estimation. Therefore, we utilize an online approach and estimate information potential at time instant $n$ from past $N$ error samples as follows:
\begin{align}
\label{L1}
I_2(E)=\mathrm{\textbf{E}} \{p_E(E)\}\approx \frac{1}{N}\sum _{i=0}^{N-1}p_E(e_{n-i})=\hat{I}_2^{(n)}(E). 
\end{align}
Note that although online approaches cannot usually optimize our cost function precisely, they are able to quickly process an extremely large data-set and get close enough to the optimum solution \cite{22.1}.

As stated earlier, we do not know the error statistics, therefore we estimate $p_E(e_{n-i})$ in (\ref{L1}) from error samples by using the  Parzen window technique (\ref{parzen}) as follows:
\begin{align}
\label{L2}
p_E(e_{n-i})\approx \frac{1}{N}\sum _{j=0}^{N-1}G_{\sigma}(e_{n-i}-e_{n-j}). 
\end{align}
Substituting (\ref{L2}) into (\ref{L1}), we have the following estimate of information potential from past $N$ error samples at time instant $n$:
\begin{align}
\label{L3}
\hat{I}_2^{(n)}(E)=\frac{1}{N^2}\sum _{i=0}^{N-1}\sum _{j=0}^{N-1}G_{\sigma}(e_{n-i}-e_{n-j}). 
\end{align}
Relation (\ref{L3}) is indeed the online cost function of MEE that we optimize using stochastic gradient ascent. MEE has a computational complexity of $O(N^2)$ in each iteration.

As for the higher-order statistics, MEE contains all higher-order moments of the error PDF (if they exist) regardless of the type of the kernel we use for Parzen PDF estimation as opposed to MCC, hence we can consider it as a global descriptor of the error PDF. This is shown by using Taylor expansion of error PDF $p_E(e)$ as follows:
\begin{align}
&p_E(e)=p_E(0)+p_E^{(1)}(0)e+\frac{p_E^{(2)}(0)}{2!}e^2+\frac{p_E^{(3)}(0)}{3!}e^3+\cdots \nonumber \\
\Rightarrow &I_2(E)=\textbf{E}\{p_E(E)\} = p_E(0)+p_E^{(1)}(0)\textbf{E}\{E\} \nonumber \\
&~~~~~~~~~~~~~+\frac{p_E^{(2)}(0)}{2!}\textbf{E}\{ E^2\}+\frac{p_E^{(3)}(0)}{3!}\textbf{E}\{ E^3\}+\cdots 
\nonumber 
\end{align}
where $p_E^{(i)}(0)$ is the $ith$ derivative of PDF at $e=0$.
Note that PDFs in practice are usually smooth and consequently continuously differentiable.
\subsubsection{Comparison of MEE and MCC}
Although both MEE and MCC take into account information content of error and its higher-order statistics, MEE is expected to have a superior performance compared to MCC in general at the cost of higher computational complexity. This has been shown in many experimental results \cite{8.5,12.4,34NEW}. The problem of MCC arises from being a local criterion that takes into account mostly the errors within the Gaussian kernel bandwidth, while error modes might in fact be far from the origin. On the other hand, MEE's superior performance emerges from self-adjusting the weights of different error samples based on the error distribution itself. Therefore, MCC may not perform as efficiently as MEE in non-Gaussian noises with a light-tail or multimodal distribution \cite{1}.

As mentioned in \cite{8.5}, the difference between cost functions of MCC and MEE can be obtained by using Euclidean distance between error PDF $p_E(e)$ and Gaussian kernel as follows:
\begin{align}
\label{MCCvsMEE}
&D_{ED}\left( p_E(e),G_{\sigma }(e)\right) = \int \left( p_E(e)-G_{\sigma }(e)\right)^2de \nonumber \\
&=\overbrace{ \int \left( p_E(e)\right)^2de }^{I_2(E)}+\overbrace{\int \left( G_{\sigma }(e)\right)^2de}^{\frac{1}{2\sigma \sqrt{\pi}}} -2 \overbrace{\int p_E(e)G_{\sigma }(e)de}^{v(E)} \nonumber \\
&\Longrightarrow ~I_2(E) + \frac{1}{2\sigma \sqrt{\pi}} = 2v(E)+D_{ED}\left( p_E(e),G_{\sigma }(e)\right). 
\end{align}
Obviously, there is a difference between MCC and MEE cost functions based on (\ref{MCCvsMEE}), and consequently their optimum solution may be different as well.

\begin{corollary}
If Gaussian kernel bandwidth $\sigma $ is fixed (not adaptive) and also Euclidean distance between error PDF and Gaussian kernel is zero, then MCC and MEE are equivalent and result in the same solution.
\end{corollary}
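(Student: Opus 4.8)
The plan is to read the conclusion directly off the master identity (\ref{MCCvsMEE}), which already links the MEE objective $I_2(E)$, the MCC objective $v(E)$, and the Euclidean distance $D_{ED}(p_E(e),G_{\sigma}(e))$. Recall that, as noted above, minimizing $H_2(E)=-\log I_2(E)$ is equivalent to maximizing $I_2(E)$, so MEE amounts to $\max_{\textbf{w}} I_2(E)$ while MCC amounts to $\max_{\textbf{w}} v(E)$. The strategy is to substitute the two hypotheses into (\ref{MCCvsMEE}) and show that the two objectives become affine functions of one another with a positive leading coefficient, from which they must share the same maximizer.

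First I would invoke the hypothesis that $\sigma$ is fixed: the term $\frac{1}{2\sigma\sqrt{\pi}}$ then does not depend on the parameter vector $\textbf{w}$ and is constant throughout the optimization. Next I would invoke the second hypothesis, $D_{ED}(p_E(e),G_{\sigma}(e))=0$, to eliminate the last term of (\ref{MCCvsMEE}). What remains is
\begin{align}
I_2(E)=2\,v(E)-\frac{1}{2\sigma\sqrt{\pi}}. \nonumber
\end{align}
Because the right-hand side is an increasing affine transformation of $v(E)$ (slope $2>0$, constant offset), taking $\arg\max_{\textbf{w}}$ on both sides gives $\arg\max_{\textbf{w}} I_2(E)=\arg\max_{\textbf{w}} v(E)$; equivalently $\nabla_{\textbf{w}} I_2(E)=2\,\nabla_{\textbf{w}} v(E)$, so the stationary points of the two costs --- and hence the stochastic-gradient-ascent updates, up to the scaling absorbed by the step size --- coincide. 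This establishes that MEE and MCC return the same solution.

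The step I expect to require the most care is the interpretation of the second hypothesis rather than any computation. The quantity $D_{ED}(p_E(e),G_{\sigma}(e))$ depends on $\textbf{w}$ through the error PDF $p_E$, so ``the Euclidean distance is zero'' must be read as $p_E(e)\equiv G_{\sigma}(e)$ holding over the optimization domain, not merely at an isolated point. If it vanished only at one $\textbf{w}$, the residual $\textbf{w}$-dependence of $D_{ED}$ would in general perturb the maximizer and break the equivalence. Under the stated (idealized) condition this dependence is removed and the gap between the two objectives collapses to the constant $\frac{1}{2\sigma\sqrt{\pi}}$, which, being independent of $\textbf{w}$, cannot alter the location of the optimum. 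I would close by remarking that this is exactly the regime in which the error distribution is perfectly matched to the Gaussian kernel, making the local criterion MCC and the global criterion MEE indistinguishable.
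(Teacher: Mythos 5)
Your proposal is correct and follows exactly the route the paper intends: the paper's entire proof is ``Easily seen from (\ref{MCCvsMEE})'', and you have simply made explicit the substitution of the two hypotheses into that identity, the resulting increasing affine relation $I_2(E)=2v(E)-\frac{1}{2\sigma\sqrt{\pi}}$, and the conclusion that the maximizers over $\textbf{w}$ coincide. Your cautionary remark that $D_{ED}(p_E(e),G_{\sigma}(e))=0$ must be read as holding over the optimization domain rather than at a single $\textbf{w}$ is a sound clarification of the idealized hypothesis, fully consistent with the paper's reading.
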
 
\begin{proof}
Easily seen from (\ref{MCCvsMEE}).
\end{proof} 
\textbf{Remark 1:} Gaussian kernel bandwidth $\sigma$ is a free parameter in both MCC and MEE cost functions that can be optimized during the learning process to increase algorithm efficiency. It indeed determines the magnitude of the weights assigned to each error sample and it is a function of error. Optimizing this bandwidth has been widely addressed in previous work, for instance by minimizing Kullback–Leibler divergence between the true and estimated error distribution, using shape of error distribution measured by its kurtosis, using instantaneous error in each iteration, changing the Gaussian kernel, using hybrid methods and so forth \cite{23.4,23.5,23.6,23.7,23.92,23.8,23.9,23.91}.
\subsection{Concentrating error samples around $e=0$ in MEE}
So far, we have discussed our expectation of MEE superiority over MCC. However, alongside the higher computational complexity of MEE compared to MCC that has been addressed in previous work as stated earlier, another difficulty associated with MEE is that the error PDF needs to be move to the origin, as entropy is shift-invariant. 
Towards that end, te following approaches have been proposed in the literature:
\begin{enumerate}
\item Adding sample mean of the labels (sample mean up to time instant $n$) to the output of the linear system as a bias term \cite{1}: Although this approach is very simple and works for labels with symmetric PDFs, environments in many real world scenarios are corrupted by asymmetric and heavy-tailed noises which contain many large outliers and consequently sample mean may be very misleading. In other words, sample mean may fail to converge in probability to the expected value, and law of large numbers does not hold in such environments \cite{21}.
\item Minimization of Error Entropy with Fiducial points (MEEF) \cite{22}: This approach suggests to consider a fiducial zero vector of arbitrary length $M$ whose elements are indeed points of reference and help to fix the peak of the error PDF at the origin. Consequently, error entropy minimization forces the PDF to approach an impulse around $e=0$. Now, information potential at each time instant $n$ using past $N$ error samples plus $M$ fiducial points is denoted by $\hat{I}_{2,F}^{(n)}(E)$ and is estimated as follows:\begin{align}
\hat{I}_{2,F}^{(n)}(E)=&\frac{1}{{(N+M)}^2}\sum _{i=0}^{N+M-1}\sum _{j=0}^{N+M-1}G_{\sigma }(e_{n-i}-e_{n-j}), \nonumber 
\end{align}where $[e_{n-N},e_{n-(N+1)},\cdots ,e_{n-(N+M-1)}]=\underline{\textbf{0}}_{1\times M}$ denotes fiducial zero vector. Then, above relation can be rewritten as follows:
\begin{align}
\label{MEEF}
&\hat{I}_{2,F}^{(n)}(E)=\frac{1}{{(N+M)}^2}\sum _{i=0}^{N-1}\sum _{j=0}^{N-1}G_{\sigma }(e_{n-i}-e_{n-j}) \nonumber \\
&+\frac{2M}{(N+M)^2}\sum _{i=0}^{N-1}G_{\sigma }(e_{n-i}) +\frac{M^2}{(N+M)^2}G_{\sigma }(0). 
\end{align}Relation (\ref{MEEF}) can be interpreted as a weighted combination of the error entropy criterion (first term on the right hand side) and the error correntropy criterion (second term on the right hand side). The first term strives to make the error PDF as close as possible to an impulse, while the second term pushes the peak of the error PDF towards $e=0$. This approach may outperform the previous one \cite{22,23.93}. However, there is an obvious trade-off: As we increase the number of fiducial points, i.e., $M$, the cost function (\ref{MEEF}) gets closer to that of MCC. Although this will make the role of the correntropy-related term more emphasized, thereby moving the peak of the error PDF towards the origin more aggressively, the accuracy of the entropy estimation would suffer. In other words, as seen in (\ref{MCCvsMEE}), this can be problematic and deteriorate the performance of MEE when the difference between MCC and MEE cost functions is not negligible.
\end{enumerate}

Due to the drawbacks of above methods for concentrating error samples around $e=0$ in MEE, we propose a new approach called \emph{Trimmed MEE}, which we describe in detail in the next section. 
\section{Proposed Trimmed MEE}
In this section, we modify MEE in order to improve its performance and also to overcome aforementioned shortcomings of the existing methods for locating error samples at the origin. The key idea behind the proposed method is to stop incorporating abnormally large errors (or major outliers) into the learning process. Recall that major outliers in error samples can  be very misleading as they differ significantly from other observations, therefore we strive to exclude them from other error samples.

The important question then becomes "how can we determine whether an error sample is a major outlier?" We use running quartiles of the error samples to detect major outliers. More specifically, as quantiles of a data set are robust quantities of data against outliers (and hence do not change significantly by a new major outlier) we use the concept of outer fences to determine two boundaries for major outlier rejection \cite{191,19}. 
\begin{figure}
\centering
\includegraphics[scale=.3]{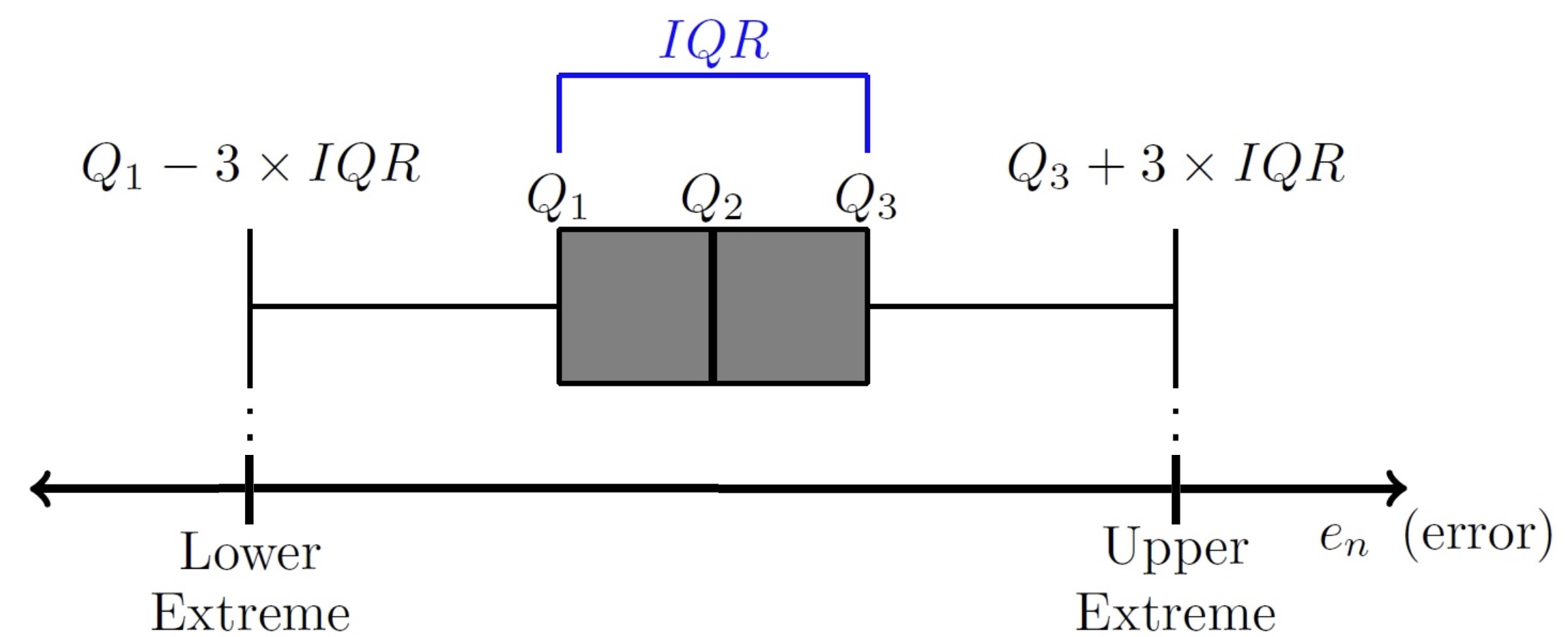}
\caption{Boundaries beyond which abnormally large error samples (major outliers) reside.}
\label{fig2}
\end{figure}
These boundaries are obtained as illustrated in Figure \ref{fig2}. In this Figure, $Q_1,~Q_2~\mathrm{and}~Q_3$ are lower quartile (or 25th percentile), median, and upper quartile (or 75th percentile), respectively, and $IQR=Q_3-Q_1$ stands for inter-quartile range. Boundaries (or outer fences) are defined as follows:
\begin{align}
\mathrm{Lower~Extreme} &=Q_1-3\times IQR, \nonumber \\ \mathrm{Upper~Extreme}&=Q_3+3\times IQR. \nonumber 
\end{align}

Obviously, in this outlier detection method, we only need to obtain $Q_1$ and $Q_3$ in each adaptation step. On the other hand, this could be challenging. For example, if we simply use order statistics to obtain these running quartiles where we simply sort all observed error samples at each time instant $n$, we would suffer from a complexity of $O\left( n\right)$ and we would have to store all previous error samples.

In the next subsection, we present our running quartile estimation technique for estimation of $Q_1$ and $Q_3$ which is not only computationally more efficient, but requires much less memory storage. 

\subsubsection{Running $Q_1$ and $Q_3$ Estimation Technique}

We quantize error samples based on a non-uniform quantization \cite{nuquan} technique, therefore we need to deal with a fixed and small number of quantization levels (or bins) instead of all data samples to obtain quartiles. Note that in our problem we expect most error samples to accumulate around the origin over time, therefore we use the following compressor function:
\begin{align}
 C(e)=\begin{cases} 
      \frac{1}{1+\exp{\left( -\alpha _1e\right) }}, & e<0 \\  
      \frac{1}{1+\exp{\left( -\alpha _2e\right) }}, & e\geq 0. 
   \end{cases} \nonumber 
\end{align}
This compressor quantizes smaller error samples around $e=0$ with more precision while quantizes error samples farther away less precisely. For this compressor function we have $0<C(e) <1$ and the function $\frac{1}{1+\exp{\left( -\alpha _ie\right) }}$ is called the logistic function in which the parameters $\alpha _1$ and $\alpha_2$ determine the precision of quantization for $e<0$ and $e\geq 0$, respectively. This function is shown in Figure \ref{fig3}.
\begin{figure}
\centering
\includegraphics[scale=.4]{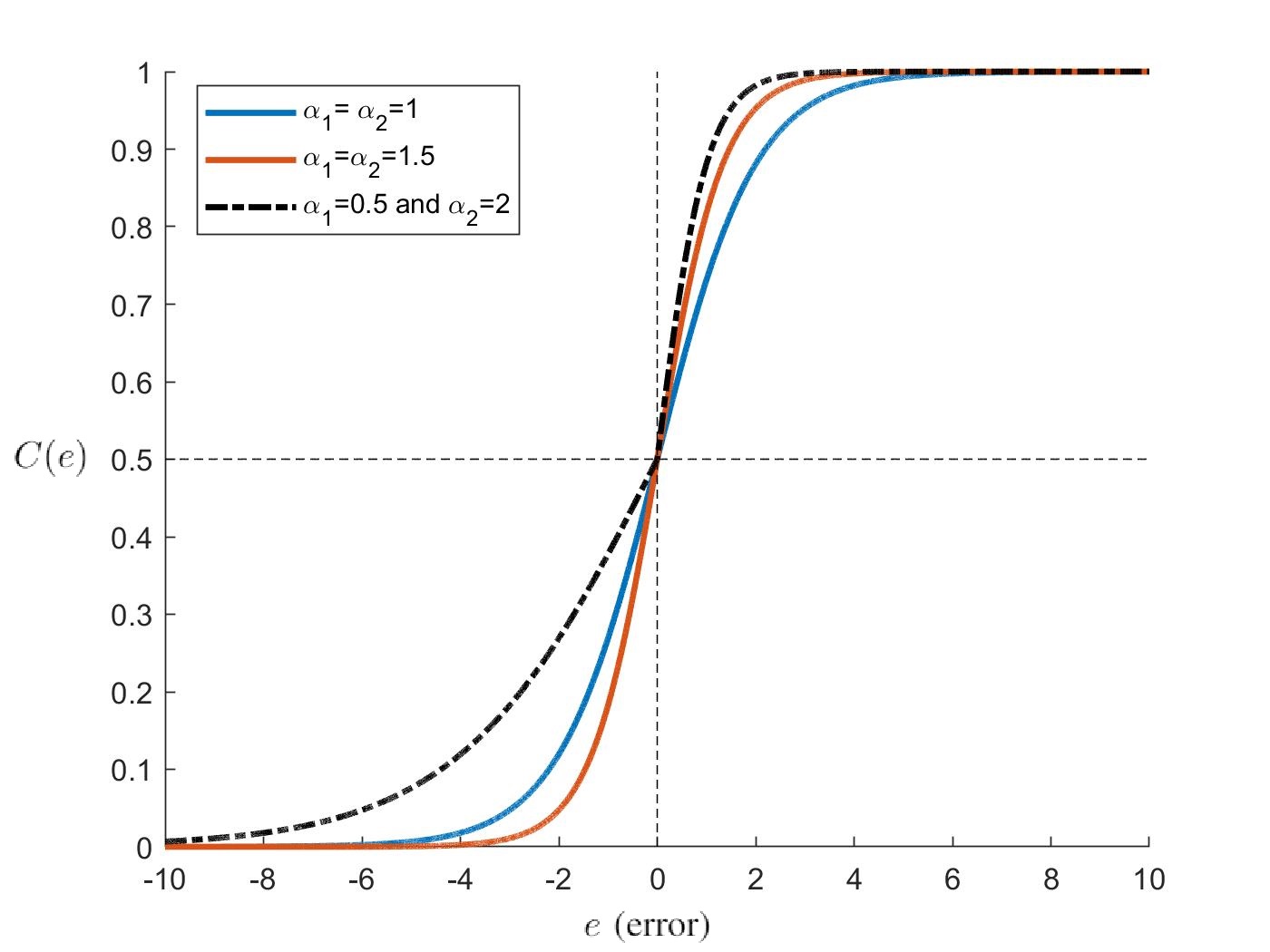}
\caption{Compressor function utilized for non-uniform quantization.}
\label{fig3}
\end{figure}
This compressor function $C(e)$ can be thought of as a cumulative distribution function (CDF) of error PDF (note that it is not exactly the CDF), therefore $0.25$ and $0.75$ on the $y$-axis in Figure \ref{fig3} denote $Q_1$ and $Q_3$, respectively. Note that since we are working with a data stream where we expect error samples to get very close to the origin over time, we can assume the median of the error samples is $0$. 
Each new error sample $e_n$ is compressed by $C(e_n)$, and is put in one of the quantization bins obtained by uniformly dividing $y$-axis in Figure \ref{fig3}. We assign a counter to each of these bins which denotes the number of error samples contained in that bin and bins below. We update these counters and obtain new $Q_1$ and $Q_3$ once a new error sample is available. The advantage of using this method is the fact that we only need to use bin counters to find $Q_1$ and $Q_3$ instead of storing all error samples and sorting them in each time instant. More detailed description of our technique is as follows:\raggedbottom
\begin{itemize}
\item For the first $M$ error samples, find $Q_1$ and $Q_3$ by simply storing and sorting them.
\item Then, adjust $\alpha _1$ and $\alpha _2$ in aforementioned compressor function accordingly, \begin{align}\label{al1}
0.25 = \frac{1}{1+\exp{\left( -\alpha _1Q_1\right) }}~~~ \Longrightarrow ~~~\alpha _1=\frac{-\ln (3)}{Q_1},  
\end{align}and similarly,
\begin{align}\label{al2}
0.75 = \frac{1}{1+\exp{\left( -\alpha _2Q_3\right) }}~~~\Longrightarrow ~~~\alpha _2=\frac{\ln (3)}{Q_3}, 
\end{align}where $\ln$ stands for natural logarithm.
\item Inasmuch as in online regression, first, we expect error samples to get closer to $e=0$ over time which emphasizes that quantization precision is important around the origin, and second, we already have an understanding about the range of error samples given the first $M$ error samples, we can set the maximum acceptable quantization error around the origin (denoted by $\epsilon$). Note that each error sample $e$ is quantized as follows:\begin{align} 
e_q=\Delta .\floor*{\frac{C(e)}{\Delta}},\nonumber 
\end{align}where $\Delta $ and $\floor{x}$ denote the quantization step size and the largest integer less than or equal to $x$, respectively. Now we obtain two quantization step sizes below based on the selected $\epsilon $:\begin{enumerate}
\item If $C=0.5$ is one of the quantization levels then,\begin{align}
&\Delta _1=C(\epsilon )-0.5
\nonumber \\
             ~~~\Longrightarrow ~~~&\Delta _1=\frac{1}{1+\exp{\left( -\alpha _2\epsilon \right) }}-0.5. \nonumber 
             \end{align}
\item If $C=0.5$ is not a quantization level then,\begin{align}
&\Delta _2=0.5-C(-\epsilon )=0.5-\frac{1}{1+\exp{\left( -\alpha _1(-\epsilon) \right) }} \nonumber \\
&~~~\Longrightarrow ~~~\Delta _2=0.5-\frac{1}{1+\exp{\left( \alpha _1\epsilon \right) }}.\nonumber 
\end{align} 
\end{enumerate}An illustration of above cases 1 and 2 is shown in Figure \ref{fig41}. Finally, step size of the quantizer is selected as, \begin{align} 
\Delta =\min\{\frac{1}{M},\Delta _1, \Delta _2\}, \nonumber 
\end{align}and number of quantization levels (or bins) is equal to: \begin{align}
QL=\ceil*{\frac{1}{\Delta}}, \nonumber 
\end{align}where $\ceil{x}$ is the least integer greater than or equal to $x$. 
\item Number of error samples are set to $QL$ and, as depicted in Figure \ref{fig4}, a number from $1$ to $QL$ is given to the counters of quantization bins. \begin{figure}
           \centering
\includegraphics[scale=.85]{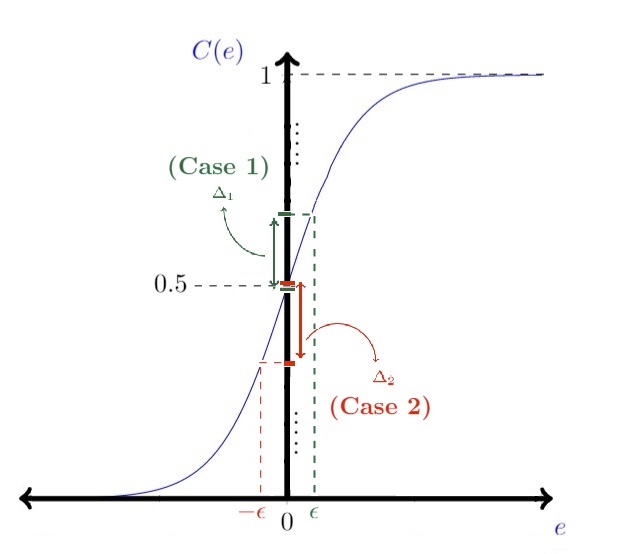}
\caption{$\Delta _1$ and $\Delta _2$ obtained in case 1 ($C=0.5$ is one of the quantization levels) and case 2 ($C=0.5$ is not a quantization level), respectively, based on the maximum acceptable quantization error $\epsilon $ around $e=0$.}
\label{fig41}
\end{figure} \begin{figure}
           \centering
\includegraphics[scale=.85]{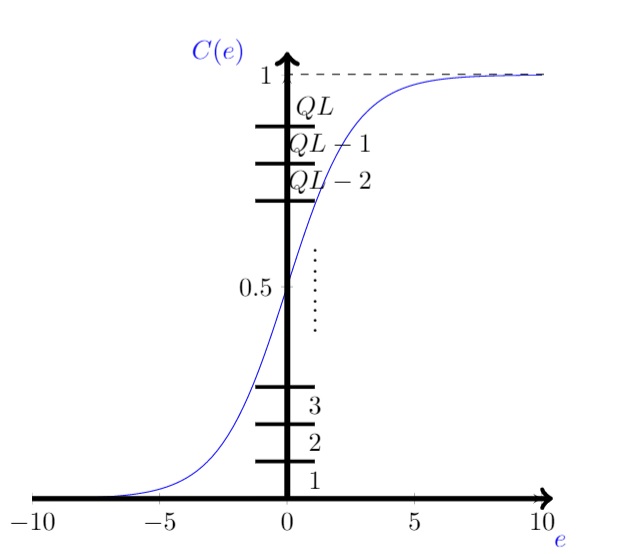}
\caption{A number from $1$ to $QL$ is assigned to counters of quantization bins.}
\label{fig4}
\end{figure}
\item Once $e_{new}$ (a new error sample) is available, we quantize $e_{new}$ and increment the counters related to the bin number $\floor*{\frac{C(e_{new})}{\Delta }}+1$ and all bins above it. Now for each bin the percentage of error samples contained in this bin and bins below it are computed as follows:\begin{align}
\mathrm{Bin}_i=\frac{\mathrm{counter}_i}{\mathrm{number~of~samples}}, \nonumber 
\end{align} where $i\in \{ 1,2,\cdots ,QL-1, QL\}$ is bin index. We can update $Q_1$ and $Q_3$ after receiving every new error sample as we have now the number above for each bin updated. Recall that inasmuch as we expect error samples to get closer to $e=0$ over time, bins related to $Q_1$ and $Q_3$ also get closer to the bin related to $e=0$ (the bin that contains $C(0)=0.5$). In order to obtain updated $Q_1$, we calculate $\ceil*{\frac{\mathrm{Bin}_i}{0.25}}$ and look for the maximum index among all bins for which $\ceil*{\frac{\mathrm{Bin}_i}{0.25}}=1$. Assume this index is found as $I_1$, then $Q_{1,new}$ is calculated as follows:\begin{align} \label{Qnew} 
&\Delta .\left( I_1-1\right) =C(Q_{1,new})=\frac{1}{1+\exp{\left( -\alpha _1.Q_{1,new} \right) }} \nonumber \\
&~~~\Longrightarrow ~~~Q_{1,new}=-\frac{\ln \left( \frac{1}{\Delta .(I_1-1)}-1\right) }{\alpha _1} . 
\end{align} Similarly, $Q_3$ is obtained as follows. Calculate $\floor*{\frac{\mathrm{Bin}_i}{0.75}}$ and look for the minimum index among all bins for which $\floor*{\frac{\mathrm{Bin}_i}{0.75}}=1$. Let this index be $I_3$, then we can obtain $Q_{3,new}$ after similar calculations as follows:\begin{align}
Q_{3,new}=-\frac{\ln \left( \frac{1}{\Delta .(I_3-1)}-1\right) }{\alpha _2} .\nonumber 
\end{align}
\end{itemize}
\vspace{.5cm}\textbf{Remark 2:} Recall that error samples become smaller over time and are mostly around origin, therefore quantization precision would be more important for error samples close to $e=0$. Moreover, we already know that $\alpha _1$ and $\alpha _2$ are significantly important in quantization precision. Consequently, in order to improve our technique when error samples approach to $e=0$ we can update $\alpha _1$ and $\alpha _2$. To this end, we update parameters $\alpha _1$ and $\alpha _2$ whenever $C(Q_1)$ and $C(Q_3)$ are at a specific distance of $C(0)=0.5$. Indeed, when bin index related to $Q_1$ is equal to, \begin{align}
I_{1,c}=\floor*{\frac{0.5}{\Delta }}-\floor*{\beta .QL}, \nonumber 
\end{align} in which we specifically assume $\beta \in (0,0.2)$ to make sure that $I_{1,c}$ is a valid bin index. Afterwards, we should update $\alpha _1$ as follows: \begin{align}
&\Delta .\left( I_{1,c}-1\right) =C(Q_{1,new}) \nonumber \\
&~~~\overset{(\ref{Qnew})}{\Longrightarrow}~~~Q_{1,new}=-\frac{\ln \left( \frac{1}{\Delta .(I_{1,c}-1)}-1\right) }{\alpha _1}  \nonumber \\
&~~~\overset{(\ref{al1})}{\Longrightarrow}~~~\alpha _{1,new} = -\frac{\ln (3)}{Q_{1,new}}. \nonumber 
\end{align}
Similarly, when bin index of $Q_3$ is equal to the following, \begin{align}
I_{3,c}=\floor*{\frac{0.5}{\Delta }}+\floor*{\beta .QL}, \nonumber 
\end{align} new $\alpha _2$ is obtained as follows: 
\begin{align}
Q_{3,new}=-\frac{\ln \left( \frac{1}{\Delta .(I_{3,c}-1)}-1\right) }{\alpha _2} ~~~\overset{(\ref{al2})}{\Longrightarrow} ~~~\alpha _{2,new} =\frac{\ln (3)}{Q_{3,new}}. \nonumber 
\end{align}
Note that once bin index related to $Q_1$ ($Q_3$) is equal to $I_{1,c}$ ($I_{3,c}$), we update both $\alpha _1$ and $\alpha _2$ and reset number of error samples to $QL$ and assign the counters of quantization bins a number from 1 to $QL$ as depicted in Figure \ref{fig4}. Recall that once we obtain $Q_1$ and $Q_3$ in each time instant we can obtain lower and upper extremes as depicted in Figure \ref{fig2} and use them to determine whether the current error sample is abnormally large (a major outlier) or not. The algorithm of the proposed error samples running quartile estimation technique is shown in Algorithm 1.

The aforementioned algorithm for running quartile estimation is the key tool that we will use in the next parts of this section.
\subsubsection{Trimming MEE}
A major outlier in error samples denotes that either an abnormally large label (significantly different from most of the labels) or an abnormally large noise in label has occurred. In either way, it seems that the benefit of just ignoring this major outlier error sample and not incorporating it in the learning process would be more than using it. This is obvious inasmuch as:\begin{itemize}
\item If the label itself is significantly different from rest of the labels, learning based on that will update the parameters in favor of this major outlier not most of the data, therefore the resultant parameters will not be promising.
\item If the label has been corrupted by an abnormally large noise, updating parameters based on this heavily corrupted data sample will be very misleading.
\end{itemize} 
Throughout this paper, we stop incorporating major outliers into online regression by detecting and eliminating them from MEE using our proposed running quartile estimation technique and finding lower and upper extremes as depicted in Figure \ref{fig2}. This can be interpreted as ignoring heavy part of the tail in error PDF and only use lighter part for learning.
\subsubsection{Locating Error PDF at the Origin in MEE}
As discussed earlier within introduction, simply adding sample mean of labels as a bias to the learned system output or using fiducial points in order to locate error PDF at the origin may be problematic. In this subsection, we deploy our running quartile estimation technique to address this problem by modifying first method in which we add a \textit{proper} bias to the learned system output. First, we discuss the possible problem of adding a bias in more details here. Remember that outliers can be very misleading in sample mean approximation inasmuch as very large values can significantly shift sample mean approximation away from actual mean in each time instant. Outliers in error samples usually arise from heavy-tailed distribution (of noise in label or label itself). 
A heavy-tailed random variable is defined as follows:\\
\begin{algorithm}[H]
 \caption{Our proposed error samples running quartile estimation technique}
 \begin{algorithmic}[1]
 \renewcommand{\algorithmicrequire}{\textbf{Input:}}
 \renewcommand{\algorithmicensure}{\textbf{Outputs:}}
 \REQUIRE $\{e_n\}$
 \ENSURE  $Q_{1,n} \mathrm{~and~} Q_{3,n}$ (first and third quartiles at time instant $n$)
 \\ \textit{Initialisation} : M and $\beta $
   \IF {($n< \mathrm{M}$)}
   
  \STATE $\mathrm{Sort~} \{ e_1,e_2,\cdots ,e_n\} \mathrm{~and~ obtain~ Q_{1,n} \mathrm{~and~} Q_{3,n}}$
  
  \ELSIF{($n= \mathrm{M}$)}
  
  \STATE $\mathrm{Sort~} \{ e_1,e_2,\cdots ,e_M\} \mathrm{~and~ obtain~} Q_{1,M} \mathrm{~and~} Q_{3,M}$
  
  \STATE Calculate $\alpha _1$ and $\alpha _2$: $$\alpha _1=\frac{-\ln (3)}{Q_{1,\mathrm{M}}} \mathrm{~~~and~~~} \alpha _2=\frac{\ln (3)}{Q_{3,\mathrm{M}}}$$
  
  \STATE Select $\epsilon$
  
  \STATE Calculate $\Delta _1$ and $\Delta _2$: \begin{align} &\Delta _1=\frac{1}{1+\exp{\left( -\alpha _2\epsilon \right) }}-0.5 \mathrm{~~~and~~~} \nonumber \\
 &\Delta _2=0.5-\frac{1}{1+\exp{\left( \alpha _1\epsilon \right) }} \nonumber \end{align}
  
  \STATE Select step size of the quantizer: $$ \Delta =\min\{\frac{1}{M},\Delta _1, \Delta _2\}$$
  
  \STATE  Set NES (number of error samples) = QL (number of quntization levels or bins) where QL=$\ceil*{\frac{1}{\Delta}}$
  
  \STATE Create a vector $\mathrm{COUNTER}_{1\times QL}$ containing counters to be updated with initialisation $\mathrm{COUNTER}(i) = i,~i=1,2,\cdots ,QL$    
  
  \STATE Calculate $ I_{1,c}$ and $ I_{3,c}$: $$ I_{1,c}=\floor*{\frac{0.5}{\Delta }}-\floor*{\beta .QL} \mathrm{~~~and~~~} I_{3,c}=\floor*{\frac{0.5}{\Delta }}+\floor*{\beta .QL}$$
  
  \ELSE 

\STATE $\mathrm{COUNTER}(j)=\mathrm{COUNTER}(j)+1$ for $j\geq \floor*{\frac{C(e_{new})}{\Delta }}+1$

  \STATE Compute the percentage of error samples contained in each bin $i$ and bins below it: $$\mathrm{Bin}_i=\frac{\mathrm{COUNTER}(i)}{\mathrm{NES}} $$
  
  \STATE $\mathrm{NES} = \mathrm{NES} +1$
  
  \STATE Find $I_1=\mathrm{\underset{i}{argmax}}\left(\ceil*{\frac{\mathrm{Bin}_i}{0.25}}==1 \right) \mathrm{~and~} I_3=\mathrm{\underset{i}{argmin}}\left( \floor*{\frac{\mathrm{Bin}_i}{0.75}}==1\right) $
  
   \STATE  Calculate $Q_{1,n}$ and $Q_{3,n}$: \begin{align}
   &Q_{1,n}=-\frac{\ln \left( \frac{1}{\Delta .(I_1-1)}-1\right) }{\alpha _1} \mathrm{~~~and~~~} \nonumber \\
   &Q_{3,n}=-\frac{\ln \left( \frac{1}{\Delta .(I_3-1)}-1\right) }{\alpha _2} \nonumber \end{align}

   \IF{($I_1==I_{1,c}$ or $I_3==I_{3,c}$)}
  
  \STATE Go back to line 5 and update $\alpha _1$ and $\alpha _2$ (replace $Q_{1,M}$ and $Q_{3,M}$ with $Q_{1,n}$ and $Q_{3,n}$, respectively)
  \STATE Go to lines 9 and 10
  
  \ELSE
  
  \STATE Continue
  
  \ENDIF
  
  \ENDIF
 \RETURN $Q_{1,n}$ and $Q_{3,n}$ 
 \end{algorithmic} 
 \end{algorithm}  
\textbf{Definition 1 \cite{20.1}:} Random variable $E$ is said to be heavy-tailed if its CDF $F(e)=P(E\leq e)$ has the following property for all $\mu >0$:
\begin{align}
\lim _{e\rightarrow \infty}sup\frac{1-F(e)}{\exp{(-\mu e)}}=\infty .\nonumber 
\end{align}This means that a distribution is heavy-tailed if its tail is heavier than that of any exponential distribution. For these kind of distributions, law of large numbers (LLN) and central limit theorem (CLT) do not hold and sample mean approximation may not converge to the actual mean \cite{20.1}. Some moments of heavy-tailed distribution may not exist (even mean and variance), for instance, Cauchy distribution is a heavy-tailed distribution for which neither mean nor variance exist and obviously sample mean approximation does not make sense. Interestingly, even if all moments exist for a heavy-tailed distribution and sample mean converges to actual mean, the convergence will be very slower than that of a light-tailed distribution. This can be easily shown, however first we need the following Lemma to state and prove Corollary 2.

\begin{lemma} \cite{20.1}: Following two statements are equivalent for random variable $E$:
\begin{itemize}
\item $E$ is heavy-tailed 
\item $M_E(s)\triangleq E\{ e^{sE}\}=\infty $ for all $s>0$
\end{itemize}
where $M_E(s)$ denotes moment generating function of $E$.
\end{lemma}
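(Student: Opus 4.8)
The plan is to prove the two implications separately, in each case exploiting the tight link between the moment generating function $M_E(s)$ and the survival function $\bar{F}(e)=1-F(e)=P(E>e)$, but through a different elementary estimate.

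First I would show that heavy-tailedness forces $M_E(s)=\infty$ for every $s>0$. Fix an arbitrary $s>0$ and pick any $\mu$ with $0<\mu<s$. Applying the heavy-tailed hypothesis (Definition 1) at this $\mu$ gives $\limsup_{e\to\infty}\frac{1-F(e)}{\exp(-\mu e)}=\infty$, so there is a sequence $e_k\to\infty$ along which $P(E>e_k)>k\exp(-\mu e_k)$. Restricting the integral $\textbf{E}\{e^{sE}\}=\int e^{se}\,dF(e)$ to the set $\{e>e_k\}$ and using $s>0$ gives the one-sided bound $M_E(s)\geq e^{se_k}P(E>e_k)>k\exp((s-\mu)e_k)$, and since $s-\mu>0$ the right-hand side diverges as $k\to\infty$; hence $M_E(s)=\infty$.

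For the converse I would argue by contraposition, producing some $s>0$ with $M_E(s)<\infty$ whenever $E$ is not heavy-tailed. Failing to be heavy-tailed means there is a $\mu>0$ for which the limsup is finite, which yields constants $C$ and $e_0$ with $P(E>e)\leq C\exp(-\mu e)$ for all $e>e_0$. Choosing any $s\in(0,\mu)$, I split $M_E(s)=\textbf{E}\{e^{sE};E\leq e_0\}+\textbf{E}\{e^{sE};E>e_0\}$, where the first term is at most $e^{se_0}<\infty$. For the second term I would use the tail identity $\textbf{E}\{e^{sE};E>e_0\}=e^{se_0}P(E>e_0)+s\int_{e_0}^{\infty}e^{st}P(E>t)\,dt$, obtained from $e^{se}=e^{se_0}+s\int_{e_0}^{e}e^{st}\,dt$ and Fubini's theorem. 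Inserting the tail bound makes the integrand at most $sC\,e^{(s-\mu)t}$, which is integrable on $[e_0,\infty)$ exactly because $s-\mu<0$, so $M_E(s)<\infty$, contradicting the assumption that $M_E(s)=\infty$ for all $s>0$.

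The two estimates themselves are routine; the steps requiring the most care are the correct handling of the $\limsup$ in each direction—extracting the divergent subsequence $\{e_k\}$ from $\limsup=\infty$ in the forward implication, and upgrading a finite $\limsup$ to the uniform tail bound $P(E>e)\leq C\exp(-\mu e)$ valid for all $e>e_0$ in the converse—together with the justification of the tail-integral representation of $M_E(s)$, where the interchange of expectation and integration must be verified via Fubini. A secondary subtlety worth checking is that the quantifiers align: any $s>0$ is handled by choosing $\mu<s$ in one direction, and any offending $\mu$ is handled by choosing $s<\mu$ in the other, so that ``for all $s>0$'' and ``for all $\mu>0$'' correspond correctly.
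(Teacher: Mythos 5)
Your proof is correct, but there is nothing in the paper to compare it against at the level of technique: the paper's entire proof of this lemma is the single line ``Provided in \cite{20.1}'', i.e., it defers to the external reference from which Definition 1 is taken, whereas you supply a self-contained argument. Checking your two directions: in the forward implication, fixing $s>0$ and applying the definition with any $\mu\in(0,s)$ does yield a sequence $e_k\to\infty$ with $P(E>e_k)>k\exp(-\mu e_k)$, and the bound $M_E(s)\geq e^{se_k}P(E>e_k)>k\,e^{(s-\mu)e_k}$ is valid because $e^{sE}\geq e^{se_k}$ on $\{E>e_k\}$ and the integrand is nonnegative elsewhere; since $s-\mu>0$ and $e_k\to\infty$, this forces $M_E(s)=\infty$. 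In the converse, you correctly negate the universally quantified definition to get one $\mu>0$ with finite $\limsup$, upgrade it to the uniform bound $P(E>e)\leq C e^{-\mu e}$ for $e>e_0$, and for $s\in(0,\mu)$ your split is sound: on $\{E\leq e_0\}$ the bound $e^{sE}\leq e^{se_0}$ holds (and, since $s>0$, this also disposes of an arbitrarily heavy \emph{left} tail, a point worth making explicit), while the tail identity $\textbf{E}\{e^{sE};E>e_0\}=e^{se_0}P(E>e_0)+s\int_{e_0}^{\infty}e^{st}P(E>t)\,dt$ is justified by Tonelli because the integrand $\mathbf{1}\{t<E\}e^{st}$ is nonnegative, and the majorant $sC\,e^{(s-\mu)t}$ is integrable since $s-\mu<0$. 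The quantifier bookkeeping you flag ($\mu<s$ in one direction, $s<\mu$ in the other) is exactly the crux and is handled correctly. This is the standard argument from the heavy-tail literature; what your version buys is that the paper becomes self-contained at this point, so that Corollary 2 (closure of heavy-tailedness under independent sums via $M_{S_n}(s)=\prod_i M_{E_i}(s)=\infty$), whose proof in the paper leans entirely on this lemma, no longer rests on an external pointer.
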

\begin{proof}
Provided in \cite{20.1}.
\end{proof}
Next, we use above lemma to prove following corollary.
\begin{corollary}
Consider $n$ independent heavy-tailed random variables $E_1,E_2,\cdots ,E_n$, then sum random variable $S_n=E_1+E_2+\cdots +E_n$ is also a heavy-tailed random variable.
\end{corollary}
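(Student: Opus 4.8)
The plan is to reduce everything to the moment generating function (MGF) characterization of heavy-tailedness supplied by the preceding Lemma, and then to exploit the multiplicative behavior of the MGF under independence. By the Lemma, $S_n$ is heavy-tailed if and only if $M_{S_n}(s) = \textbf{E}\{e^{sS_n}\} = \infty$ for every $s > 0$, so it suffices to establish this single property of the sum.

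First I would write $e^{sS_n} = \prod_{i=1}^n e^{sE_i}$ and observe that this is a product of independent, strictly positive random variables. Because the integrand is nonnegative, the expectation factorizes even when the individual expectations are infinite (this is Tonelli's theorem applied to the product measure, valid as an identity in the extended sense on $[0,\infty]$), which gives
\begin{align}
M_{S_n}(s) = \textbf{E}\Big\{\prod_{i=1}^n e^{sE_i}\Big\} = \prod_{i=1}^n \textbf{E}\{e^{sE_i}\} = \prod_{i=1}^n M_{E_i}(s). \nonumber
\end{align}
Next I would record two facts about the factors. Each factor is strictly positive, since $e^{sE_i} > 0$ almost surely forces $M_{E_i}(s) > 0$; and, because every $E_i$ is heavy-tailed, the Lemma gives $M_{E_i}(s) = \infty$ for all $s > 0$. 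A product of terms lying in $(0,\infty]$ in which at least one term equals $\infty$ is itself $\infty$, so $M_{S_n}(s) = \infty$ for all $s > 0$. Applying the Lemma in the reverse direction then shows that $S_n$ is heavy-tailed, which completes the argument. (Alternatively, one may induct on $n$, using that $S_{n-1}$ and $E_n$ are independent because $E_1,\dots,E_n$ are independent; the two-variable step is precisely the computation above.)

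The step I expect to require the most care is the factorization of the expectation when the MGFs are infinite: the familiar statement that the expectation of a product of independent variables equals the product of expectations is usually quoted for integrable variables, whereas here the factors may equal $+\infty$. The justification rests entirely on the nonnegativity of $e^{sE_i}$, which licenses Tonelli's theorem and makes the identity hold as an equality in $[0,\infty]$. Once this point is settled, the rest is an immediate consequence of the Lemma, and it is worth noting that the same reasoning in fact shows the sum remains heavy-tailed as soon as a single summand is heavy-tailed, the independent remaining summands only contributing strictly positive (possibly finite) factors.
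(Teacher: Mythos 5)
Your proposal is correct and follows essentially the same route as the paper's own proof: factor $M_{S_n}(s)=\prod_{i=1}^n M_{E_i}(s)$ using independence, note each factor is infinite for all $s>0$ by the Lemma, and apply the Lemma in reverse. Your extra care in justifying the factorization for extended-valued expectations (via nonnegativity and Tonelli) and in noting that the factors are strictly positive actually patches details the paper's one-line proof glosses over, and your closing observation that one heavy-tailed summand suffices is a valid strengthening, but the argument itself is the same.
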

\begin{proof}
Writing the moment generating function of the sum random variable, we have:
\begin{align}
M_{S_n}(s) \overset{(a)}{=} M_{E_1}\times M_{E_2}\times \cdots \times M_{E_n} \overset{(b)}{=}\infty \mathrm{~~~for~all~s>0},\nonumber 
\end{align}
where $(a)$ is due to the independence of random variables $E_1,E_2,\cdots ,E_n$ from each other and  $(b)$ is because of the fact that these random variables are heavy-tailed and consequently based on Lemma 1 their moment generating function is $\infty $. Therefore, sum random variable $S_n$ is also heavy-tailed based on Lemma 1.  
\end{proof}

Now, we show even if all moments exist for a heavy-tailed distribution and sample mean converges to actual mean, the convergence will be very slow. Consider $n$ independent and identically distributed ($i.i.d.$) heavy-tailed random variables $E_i,~i=1,2,\cdots ,n$, with mean $m $ and variance $\sigma ^2$. Similar to the proof of the corollary 2 we can show that $Z_n=\frac{S_n-nm}{\sqrt{n}}$ also has a heavy-tailed distribution with mean $0$ and variance $\sigma ^2$. Therefore, any realization $z_n$ of random variable $Z_n$ at each time instant $n$ can be written as follows:
\begin{align}
\label{HT_CLT}
z_n=\frac{s_n-nm}{\sqrt{n}}~~~\Longrightarrow ~~~\frac{s_n}{n}=m+\frac{z_n}{\sqrt{n}}, 
\end{align}
where $s_n$ is a realization of heavy-tailed distribution $S_n$ with mean $nm$ and variance $n\sigma ^2$. $\frac{s_n}{n}$ is sample mean approximation that can be heavily distorted around the true mean $m$ inasmuch as samples from a heavy-tailed distribution $Z_n$ contain very large values. Compare this with what we have for a light-tailed distribution. Consider $n$ $i.i.d.$ light-tailed random variables $E_i,~i=1,2,\cdots ,n$, with mean $m$ and variance $\sigma ^2$. Based on CLT for large $n$ we have,
\begin{align}
\label{CLT}
\frac{s_n}{n}\approx  m+\frac{z}{\sqrt{n}}, 
\end{align}
where $s_n$ and $z$ are realizations of sum random variable $S_n$ with mean $nm$ and variance $n\sigma ^2$ and a Gaussian distribution with mean $0$ and variance $\sigma ^2$, respectively. Comparing (\ref{HT_CLT}) and (\ref{CLT}) we can see although sample mean $\frac{s_n}{n}$ converges to actual mean $m$ as $n$ goes to $\infty $ for both of these heavy-tailed and light-tailed scenarios, convergence in light-tailed scenario is much faster. This is because of the fact that samples of a light-tailed distribution never differ significantly from the mean while samples from a heavy-tailed distribution contain very large samples or outliers, therefore $\frac{z}{\sqrt{n}}$ approaches much faster to zero than $\frac{z_n}{\sqrt{n}}$ as $n$ increases.

Lets get back to our main problem in this subsection. It is clear now why adding sample mean of labels to system output as a bias term in order to locate error PDF at the origin may be problematic: this sample mean approximation may not be accurate if label or noise in label has a heavy-tailed distribution. As discussed above, an important characterization of light-tailed distributions is that the samples are concentrated around the mean of the distribution not far away from that and are mostly of a
similar size. In contrast, sampling from a heavy-tailed distribution yields a few very large samples in addition to many small ones that can tremendously dominate the sum (or equivalently sample mean approximation). Inasmuch as we clean MEE and do not incorporate major outliers (or very large error samples) in learning process, we can assume them eliminated from error PDF which means we lighten the possible heavy-tailed error PDF and consequently sample mean approximation is not problematic anymore. Recall that we use again our proposed running quartile estimation technique to detect and eliminate major outliers from sample mean approximation.
\begin{algorithm}[H]
 \caption{Trimmed MEE}
 \begin{algorithmic}[1]
 \renewcommand{\algorithmicrequire}{\textbf{Inputs:}}
 \renewcommand{\algorithmicensure}{\textbf{Output:}}
 \REQUIRE $\{\textbf{x}_n,d_n\}$
 \ENSURE  $\textbf{w}_n$ and $\mathrm{BIAS}_n$
 \\ \textit{Initialisation} : M, $\beta,~\mu$, $\sigma $, N, $\mathrm{counter}_{NO} = 0$ (number of non-outliers), $\bar{d}=0$, $\bar{\textbf{x}}=\textbf{0}$, $\mathrm{BIAS_0} = 0$ and $\textbf{w}_0=\textbf{0}$ 
  \FOR {$\mathrm{each~iteration}~n$}
  \STATE $e_n=d_n-(\textbf{x}_n^T\textbf{w}_{n-1}+\mathrm{BIAS}_{n-1})$
  \STATE Use algorithm 1 to obtain $Q_{1,n}$ and $Q_{3,n}$
  \STATE $IQR_n=Q_{3,n}-Q_{1,n}$
  \STATE $UE_n$ =$Q_{3,n}+3\times IQR_n$ (upper Extreme at time instant $n$)
  \STATE $LE_n$ =$Q_{1,n}-3\times IQR_n$ (lower Extreme at time instant $n$)
  \IF {($LE_n\leq e_n\leq UE_n$)}
  \STATE Calculate,
  \begin{align}
\nonumber \\
 &\bar{d}=\frac{d_n+(\mathrm{counter}_{NO})\bar{d}}{\mathrm{counter}_{NO}+1},  ~~~\bar{\textbf{x}}=\frac{\textbf{x}_n+(\mathrm{counter}_{NO})\bar{\textbf{x}}}{\mathrm{counter}_{NO}+1}, \nonumber \\
 \nonumber \\
 &\mathrm{counter}_{NO} = \mathrm{counter}_{NO} +1, \nonumber  \\
 \nonumber 
 \end{align}
 \begin{dmath*}
 ~~~~~\boldsymbol{\nabla }\hat{I}_2^{(n)}(E) = \frac{1}{N^2\sigma ^2}{\sum \sum }_{\underset{LE_{n}\leq e_{n-i},e_{n-j} \leq UE_{n}}{0\leq i,j\leq N-1,}}\Bigg[ G_{\sigma }(e_{n-i}-e_{n-j}) (e_{n-i}-e_{n-j})(\textbf{x}_{n-i}-\textbf{x}_{n-j}) \Bigg] , 
\end{dmath*}
\STATE $\textbf{w}_n=\textbf{w}_{n-1}+\mu \boldsymbol{\nabla }\hat{I}_2^{(n)}(E) ,~~~ \mathrm{BIAS}_n=\bar{d}-\bar{\textbf{x}}^T\textbf{w}_n$
  \ELSE
  \STATE $\textbf{w}_n=\textbf{w}_{n-1},~~~\mathrm{BIAS}_n=\mathrm{BIAS}_{n-1}$
  \ENDIF
  \ENDFOR
 \RETURN $\textbf{w}_n$ and $\mathrm{BIAS}_n$ 
 \end{algorithmic} 
 \end{algorithm}
Eventually, to sum up this section, our proposed algorithm for online linear regression, called \textit{Trimmed} MEE, is shown in Algorithm 2. At each time instant $n$ the output of the system is learned as follows:
\begin{align}
y_n=\textbf{x}_n^T\textbf{w}_{n-1}+\mathrm{BIAS}_{n-1},  \nonumber
\end{align}
where $\mathrm{BIAS}_{n-1}$ is the sample mean calculated at time instant $n-1$ from samples of the tail-lightened but non-centered error PDF. The tail-lightened and centered error is denoted by (note that major outliers are not considered in following calculations):
\begin{align}
\label{efinal}
e_n=d_n-y_n=d_n-(\textbf{x}_n^T\textbf{w}_{n-1}+\mathrm{BIAS}_{n-1}).      
\end{align}

 In the next section, simulation results are shown and discussed.
\section{Simulation Results}
Throughout this section we consider the linear adaptive filtering problem illustrated in Figure \ref{fig6} which can be viewed as an online linear regression (for MEE and Trimmed MEE the BIAS block is also considered). In this Figure we have $\tilde{d}_n=\textbf{x}_n^T\textbf{w}_{opt}$ and $\tilde{y}_n=\textbf{x}_n^T\textbf{w}_{n-1}$ where $\textbf{x}_n=\big[ x_n,x_{n-1},\cdots ,x_{n-L+1}\big] ^T$ and $\textbf{w}_{n-1}=\big[ w_{n-1}^{(1)},w_{n-1}^{(2)},\cdots ,w_{n-1}^{(L)}\big] ^T$ denote the input vector to the system at time instant $n$ and system parameters estimated at time instant $n-1$, respectively. We employ algorithm 2 to adapt this linear system and obtain its parameter vector $\textbf{w}$ and $\mathrm{BIAS}$ in each time instant. Moreover, for the sake of computational simplicity, we assume a fixed kernel bandwidth $\sigma $ and also drop the expectation operator from both (\ref{correntropy}) and (\ref{L1}) to use the following online cost functions for stochastic MCC and stochastic MEE, respectively:
\begin{align}
&v_s(E)=G_{\sigma }(E) ~~~\Longrightarrow ~~~\hat{v}_s^{(n)}(E)=G_{\sigma }(e_n) ,\nonumber \\
&I_{2,s}(E)=p(E) ~~~\Longrightarrow ~~~\nonumber \\
&~~~~~~~~~~~~~\hat{I}_{2,s}^{(n)}(E)=p(e_n) \overset{\mathrm{Parzen}}{=}\frac{1}{N}\sum _{i=0}^{N-1}G_{\sigma }\left( e_n-e_{n-i} \right) ,\nonumber 
\end{align}
 \begin{figure*}
\centering
\includegraphics[scale=.7]{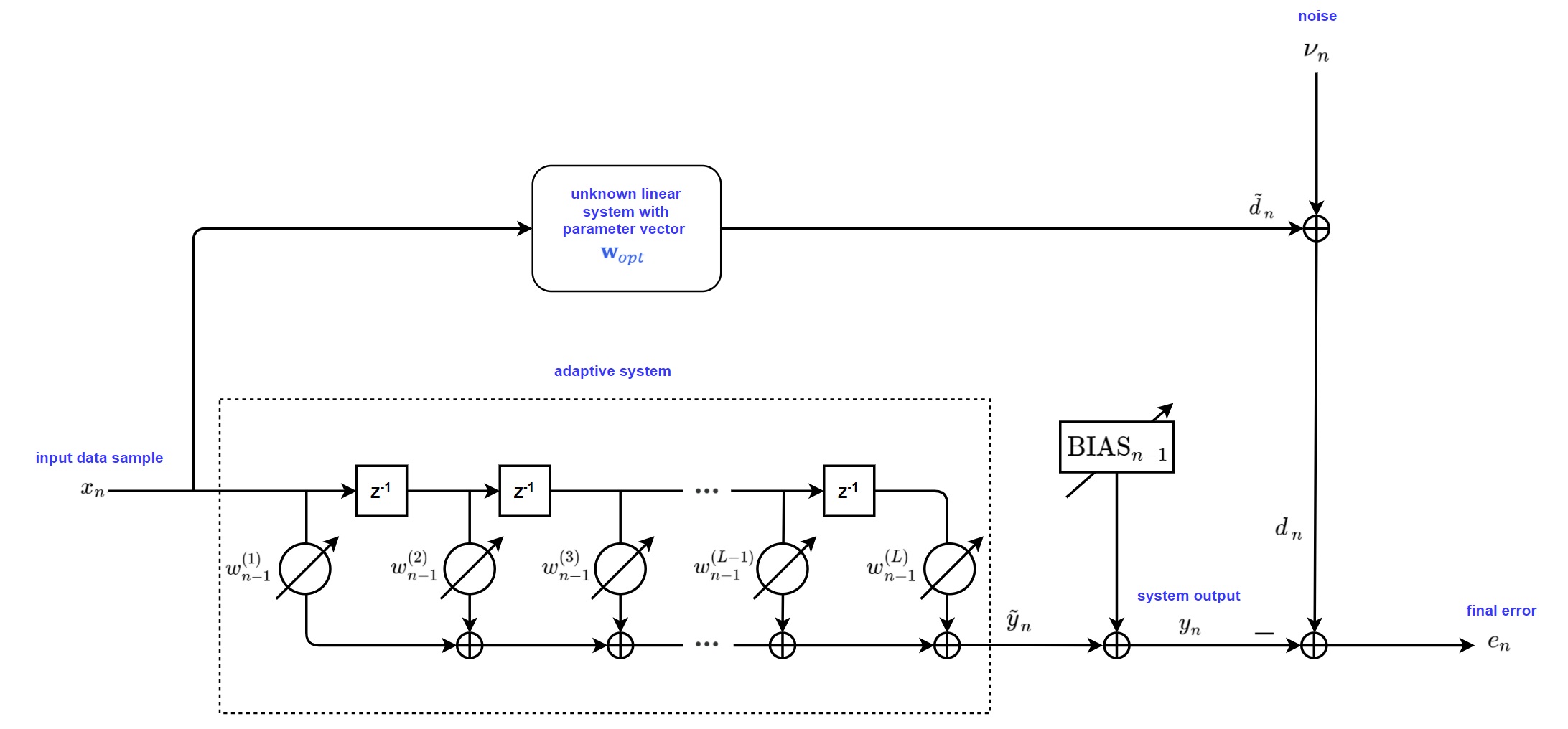}
\caption{Linear adaptive filtering.}
\label{fig6}
\end{figure*}
and consequently we have the following online cost function for stochastic MEEF with $M$ fiducial point:
\begin{align}
\hat{I}_{2,F,s}^{(n)}(E)&=\frac{1}{N+M}\sum _{i=0}^{N+M-1}G_{\sigma }\left( e_n-e_{n-i} \right) \nonumber \\
&=\frac{N}{N+M}\hat{I}_{2,s}^{(n)}(E) +\frac{M}{N+M}\hat{v}_s^{(n)}(E). \nonumber 
\end{align}
In addition, in our simulations throughout this section we draw input data samples $x_i$ in Figure \ref{fig6} from a standard normal distribution, i.e., $x_i\sim \mathcal{N} \left( 0,1\right)$. The unknown system parameter vector (or optimum weight vector to be learned) is generated randomly as a unit vector $\textbf{w}_{opt}\in \mathcal{R}^L$ where $\parallel \textbf{w}_{opt}\parallel _2=1$ and $L=5$. We also measure misalignment at time instant $n$ for an algorithm based on the normalized mean-square deviation (NMSD) as follows:
\begin{align} 
\mathrm{misalignment}_n&=20\log _{10}\left( \frac{\parallel \textbf{w}_n-\textbf{w}_{opt}\parallel }{\parallel \textbf{w}_{opt}\parallel }\right) \nonumber \\
&=20\log _{10}\left( \parallel \textbf{w}_n-\textbf{w}_{opt}\parallel \right) . 
\nonumber 
\end{align}
\vspace{-.7cm}\subsection{Comparison of MCC and MEE}
Recall that there is a difference between MCC and MEE cost functions as derived in (\ref{MCCvsMEE}) which means more difference between these cost functions will cause more difference in the results obtained based on them. We also expect MEE, as discussed earlier, to be a more comprehensive cost function than MCC and have a superior performance compared with that, therefore using fiducial points in MEEF, which alter MEE cost function to a weighted combination of MEE and MCC cost functions, will badly affect MEE performance whenever the difference between MEE and MCC is substantial. For instance, as shown in \cite{8.5}, for exponential noise the difference between two cost functions can become considerable. We can conclude then combination of MEE with MCC as MEEF is expected not to be helpful. This fact is shown in Figure \ref{fig7} for different values of learning rate $\mu $ and kernel bandwidth $\sigma $. More precisely, note that two important factors to evaluate a learning algorithm are its convergence rate and steady state misalignment. Given a specific pair $(\mu ,\sigma )$ we run each algorithm and obtain steady state misalignment for that as the sample mean of the misalignments within the last 200 time instants. Moreover, given a specific pair $(\mu ,\sigma )$ we denote convergence rate for each algorithm by $i_{conv.}$ and define it as the first time instant (or iteration) for which we have the following:
\begin{align}
\mathrm{misalignment}_{i_{conv.}}\leq \mathrm{~steady ~state~ misalignment~ +~ 2}. \nonumber 
\end{align}
Now, recall that label (or desired signal) at time instant $n$ is modeled as $d_n=\textbf{x}_n^T\textbf{w}_{opt}+\nu_n.$ We assume an exponential noise (which is not heavy-tailed), i.e., $\nu \sim Exp(\lambda )$, with following PDF:
\begin{align}
f(\nu )=\lambda \exp \left( -\lambda \nu \right) , ~\nu \geq 0~~\mathrm{and}~~f(\nu )=0,~\mathrm{otherwise}\nonumber 
\end{align}
and 30dB signal to noise ratio (SNR) calculated as follows:
\begin{align}
\mathrm{SNR}=10\log _{10}\left( \frac{E\left \{ \big[\textbf{x}_n^T\textbf{w}_{opt}\big] ^2\right \} }{E\{ \nu _n^2 \}}\right) , \nonumber 
\end{align}
which results in $\lambda = \sqrt{2000}$. Achievable bounds in Figure \ref{fig7} are obtained from convex hull of all points resulted from pairs $(\mu , \sigma )$ such that $\mu \in [0.05:0.005:0.1]$ and $\sigma \in [0.2:0.1:1.4]$ where for each pair $(\mu ,\sigma )$ the convergence rate and steady state misalignment results are obtained by averaging learning curves over 50 Monte Carlo simulations. As seen in Figure \ref{fig7}, MEE outperforms MCC significantly for exponential noise. This can be translated to the fact that using MEEF for an environment corrupted by this exponential noise is not helpful inasmuch as it degrades MEE performance by incorporating MCC into it. This is shown in Figure \ref{fig8}-a in which learning curves (averaged over 200 Monte Carlo simulations) of MEE, MCC and MEEF are compared. As seen in this Figure, as we increase number of fiducial points, the performance of the MEE algorithm deteriorates.  
\begin{figure}[t]
\centering
\includegraphics[scale=.5]{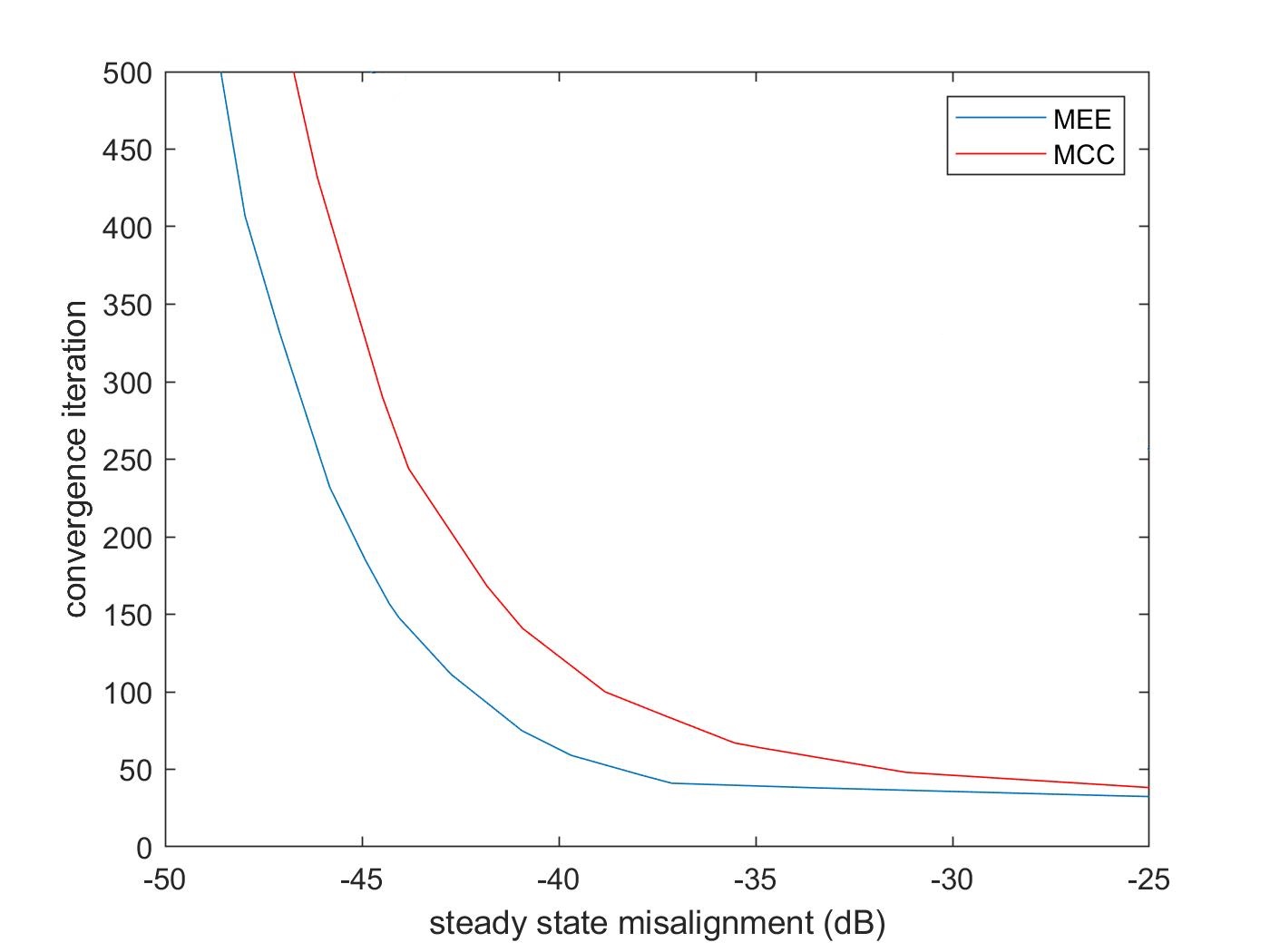}
\caption{Convergence iteration vs. steady state misalignment for MEE and MCC algorithms obtained from different pairs $(\mu , \sigma )$ such that $\mu \in [0.05:0.005:0.1]$ and $\sigma \in [0.2:0.1:1.4]$ (in presence of exponential noise and 30dB SNR). }
\label{fig7}
\end{figure}

\begin{figure}[t]
\centering
\subfloat{\textbf{a.}}{
	\includegraphics[width=0.45\textwidth]{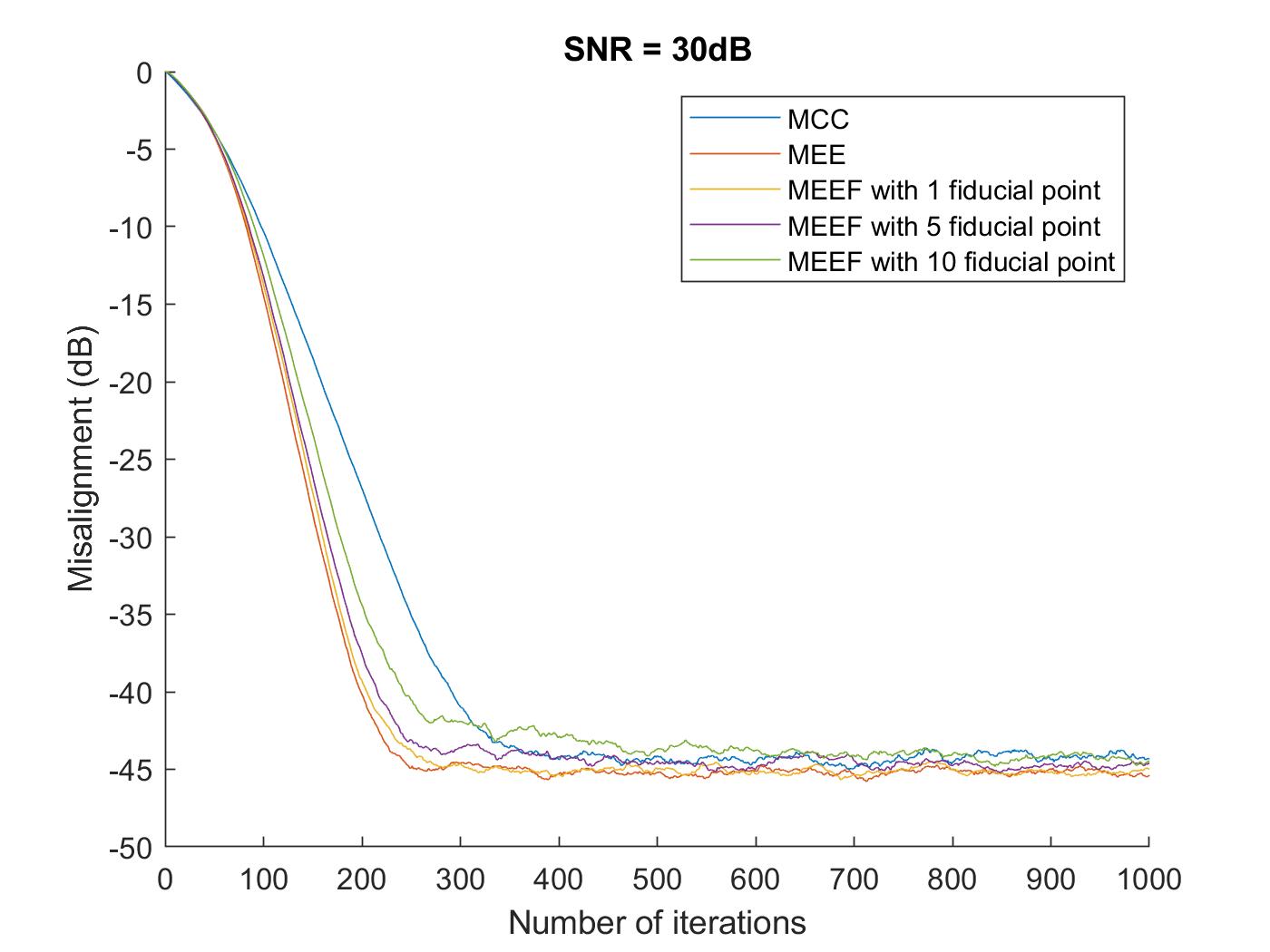} } 
\subfloat{\textbf{b.}}{
	\includegraphics[width=0.45\textwidth]{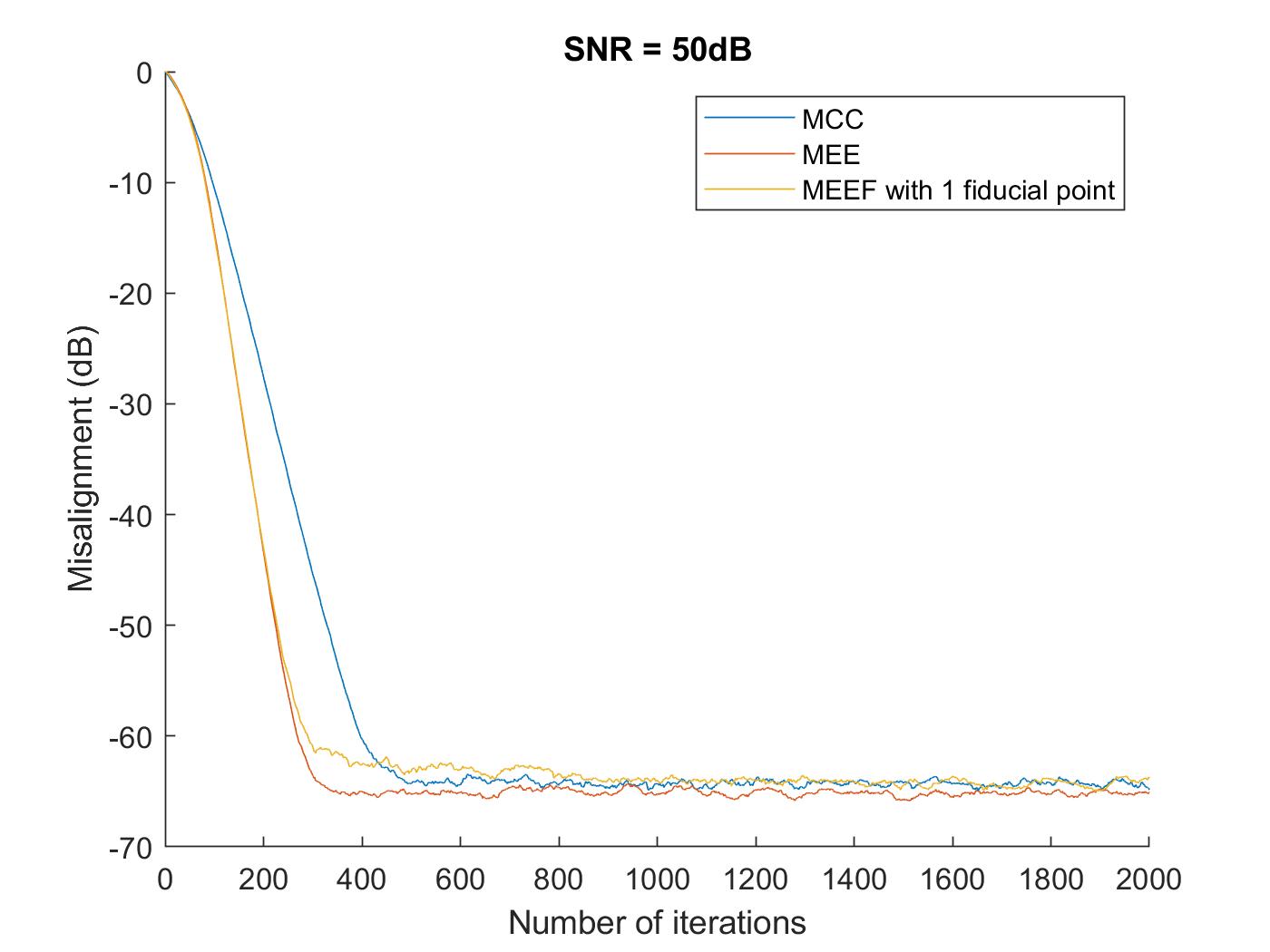} } 
\caption{Learning curves of different algorithms under exponential noise averaged over 200 Monte Carlo simulations with $\mu = 0.05$ and $\sigma =1$ (a. SNR=30dB and b. SNR=50dB).}
\label{fig8}
\end{figure}
It is worth mentioning that, as discussed in \cite{8.5}, when the parameter of the exponential noise $\lambda $ is increased the difference between MCC and MEE cost functions becomes larger as well and therefore we expect more destructive effect of incorporating MCC into MEE as MEEF. This is shown in Figure \ref{fig8}-b where we assume 50dB SNR which results in a larger $\lambda $ ($\lambda =\sqrt{2\times 10^5}$) and as depicted in this Figure even one fiducial point corrupts MEE performance significantly. Convergence iteration versus steady state misalignment achievable bounds for MEE, MEEF with 1 fiducial point and MCC are illustrated in Figure \ref{fig9} for 50dB SNR. As seen in this figure, MEE outperforms both MCC and MEEF.
\begin{figure}[t]
\centering
\includegraphics[scale=.5]{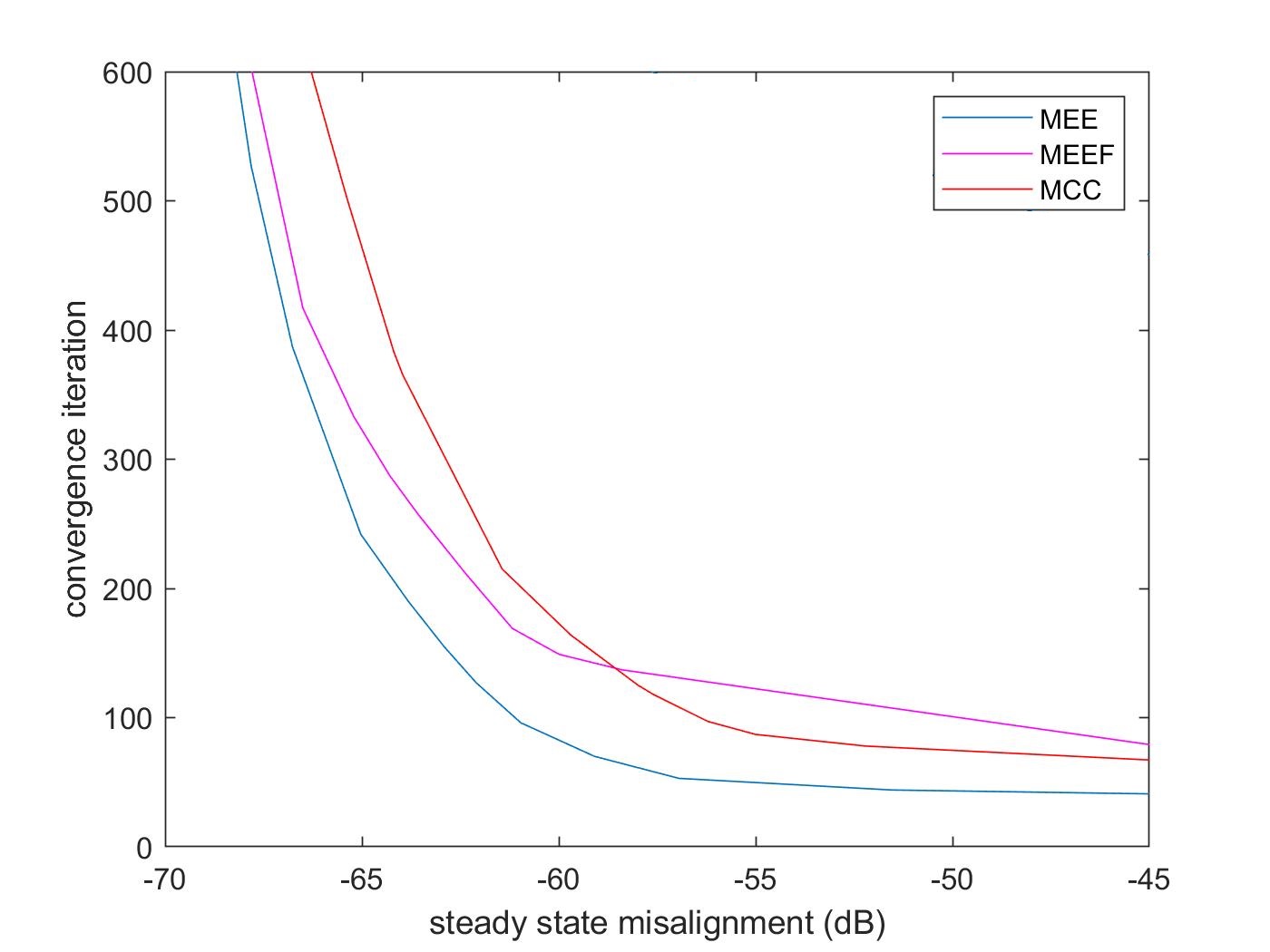}
\caption{Convergence iteration vs. steady state misalignment for MEE, MEEF with 1 fiducial point and MCC algorithms obtained from different pairs $(\mu , \sigma )$ such that $\mu \in [0.05:0.005:0.1]$ and $\sigma \in [0.2:0.1:1.4]$ (in presence of exponential noise and 50dB SNR). }
\label{fig9}
\end{figure}
\begin{figure}[t]
\centering
\includegraphics[scale=.16]{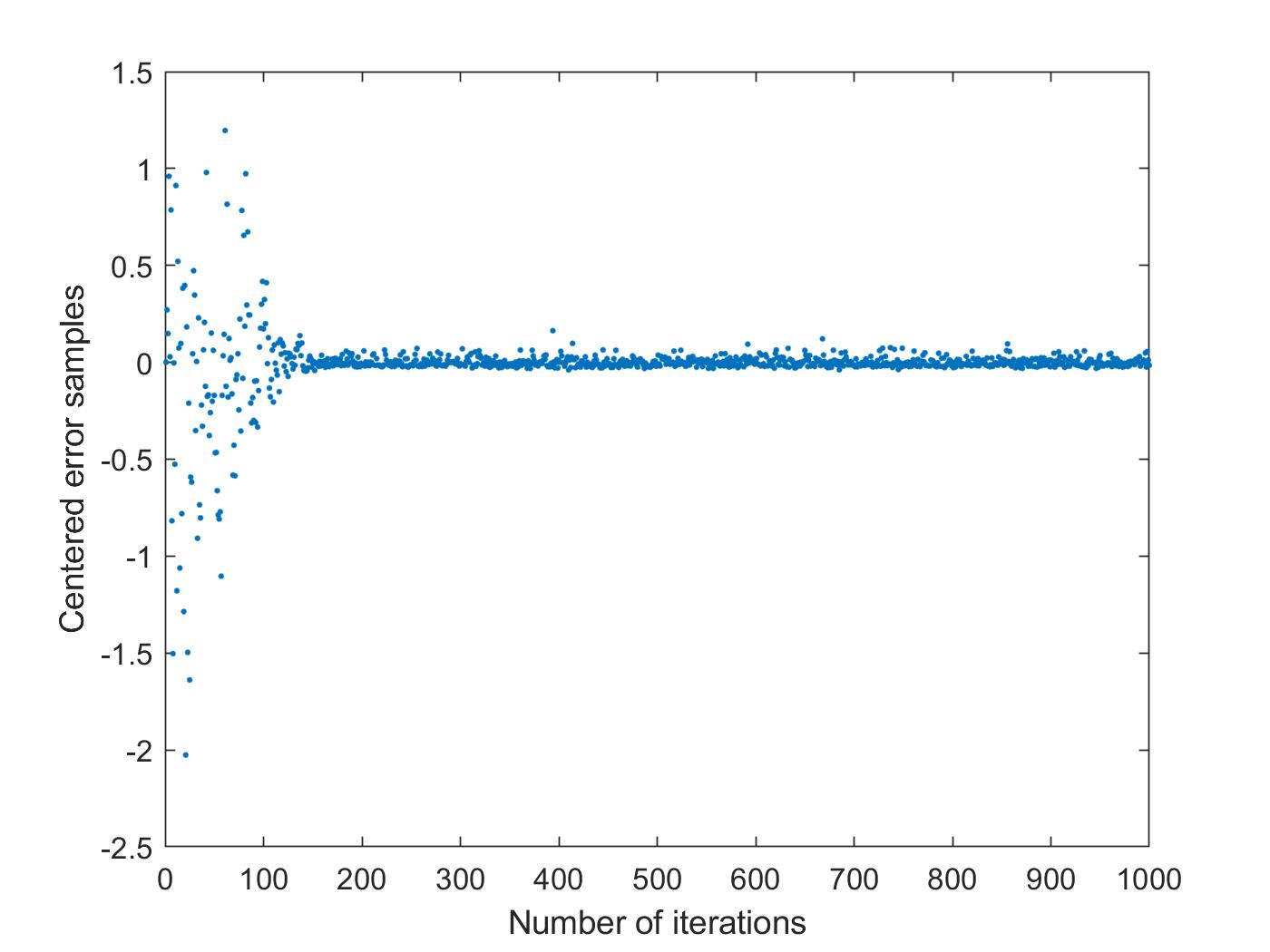}
\caption{Scatter plot of final error samples of MEE algorithm of Figure \ref{fig8}-a.}
\label{fig10}
\end{figure}
\subsection{Performance analysis of our proposed error samples running quartile estimation technique}
The ground truth for evaluation of our proposed error samples running quartile estimation technique are the results obtained for $Q_{1,n}$ and $Q_{3,n}$ at time instant $n$ from order statistics by simply sorting final error samples (centered error samples) in each time instant. We consider final error samples (\ref{efinal}) resulted from running MEE algorithm in Figure \ref{fig8}-a. The scatter plot of these error samples are shown in Figure \ref{fig10}. As seen in this Figure, centered error samples are accumulated around the origin with time (and consequently $Q_1$ and $Q_3$ also get closer to the origin with time) which verifies our assumption in algorithm 1 that median of error samples is assumed zero. It is worth mentioning that as expected we do not see major outliers in centered error samples in Figure \ref{fig10} inasmuch as exponential noise is not heavy-tailed. Now we plot $Q_{1,n}$ and $Q_{3,n}$ of these centered error samples in Figure \ref{fig11}-a and Figure \ref{fig11}-b, respectively obtained from both order statistics and our proposed technique with $M=100$ and $\epsilon = 0.01$ (Algorithm 1). As seen in this Figure, the difference between $Q_{1,n}$ ($Q_{3,n}$) obtained from sorting and that obtained from our technique is negligible.
\begin{figure}[t]
\centering
\subfloat{\textbf{a.}}{
	\includegraphics[width=0.45\textwidth]{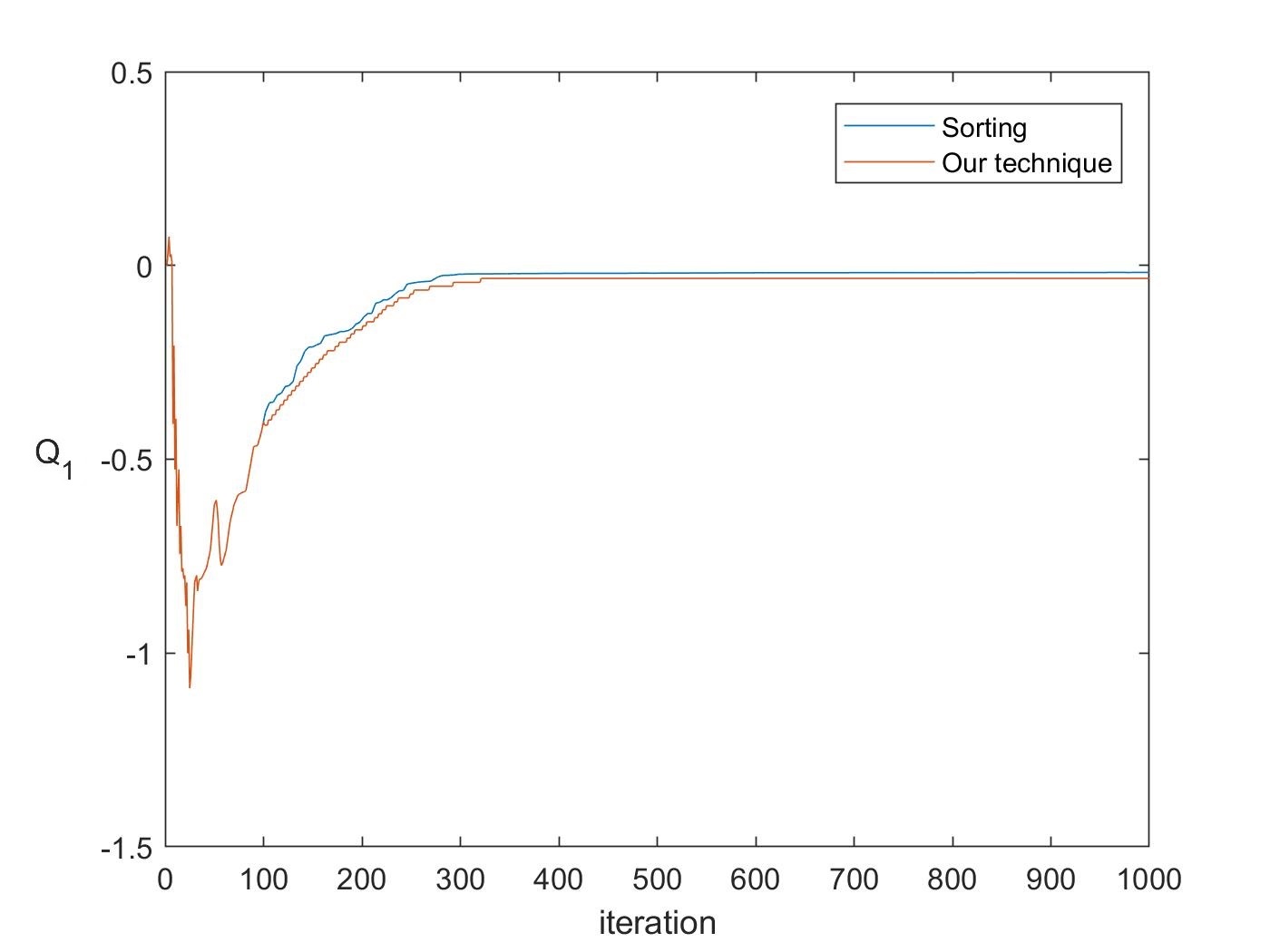} } 
\subfloat{\textbf{b.}}{
	\includegraphics[width=0.45\textwidth]{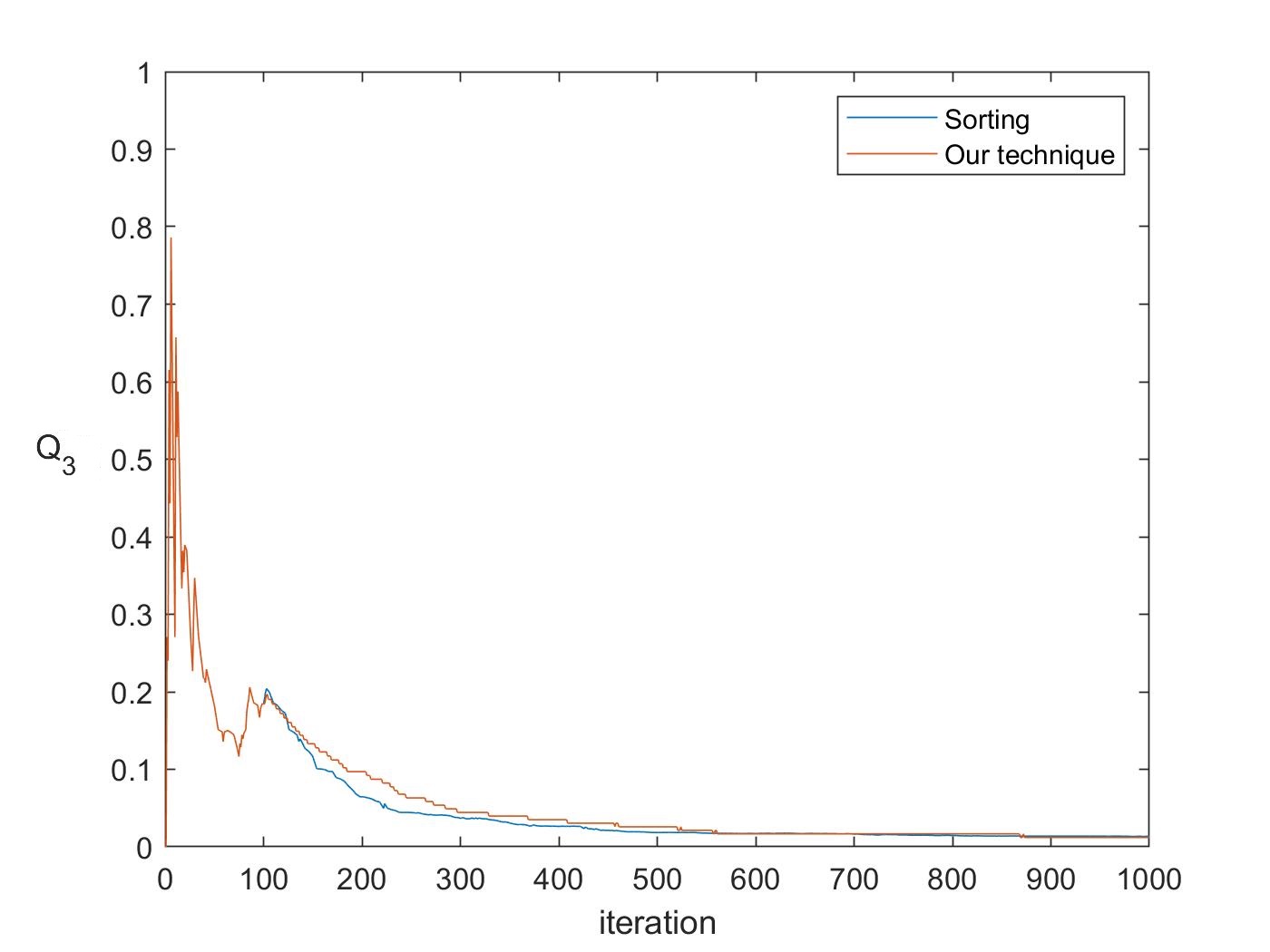} } 
\caption{Lower quartile $Q_1$ and upper quartile $Q_3$ of error samples of Figure 9 estimated at each time instant based on sorting and algorithm 1 (our technique with $M=100$ and $\epsilon = 0.01$).}
\label{fig11}
\end{figure}

In the following, we show how our technique in Algorithm 1 will lighten an impulsive noise which has a heavy-tailed distribution and is modeled as a mixture of two Gaussian distributions (with means equal to $0$ and standard deviations equal to $10^{-4}$ and $10$) as follows: 
\begin{align}
\label{impNoi}
\nu \sim 0.9\mathcal{N}(0,10^{-8}) + 0.1\mathcal{N}(0,100).  
\end{align}
The second term of above impulsive noise generates abnormally large noise samples or outliers. Scatter plot and histogram (with 100 bins with equal size) of 10000 samples drawn from this distribution are shown in Figure \ref{fig12}-a and Figure \ref{fig12}-b, respectively. Next, we deploy our running quartile technique in Algorithm 1 to obtain $Q_{1,n}$ and $Q_{3,n}$ and then calculate upper and lower extremes based on them. Now, we use these extremes to detect and exclude these major outliers from noise samples which results in scatter plot of non-outlier noise samples in Figure \ref{fig12}-c (974 noise samples out of total 10000 noise samples have been detected and excluded as major outliers). As seen in this Figure, non-outlier noise samples are concentrated around $\nu =0$. Finally, Figure \ref{fig12}-d shows histogram of these non-outlier noise samples that can be interpreted as histogram of samples drawn from lightened version of impulsive noise (\ref{impNoi}) (again number of equal-size bins is 100).

We can readily calculate the mean of random variable $\nu $ distributed according to heavy-tailed distribution (\ref{impNoi}) which is $0$. Figure \ref{fig13} depicts how sample mean approximation will converge to actual mean, i.e., $0$. As illustrated in this Figure, if we use all noise samples including major outliers convergence occurs with many fluctuations around the actual mean while when we exclude these major outliers, convergence to the mean of the lightened noise distribution happens very fast and more consistently and smoothly.

\begin{figure*}[t]
  \centering
\subfloat{\textbf{a.}}{
	\includegraphics[width=.4\linewidth]{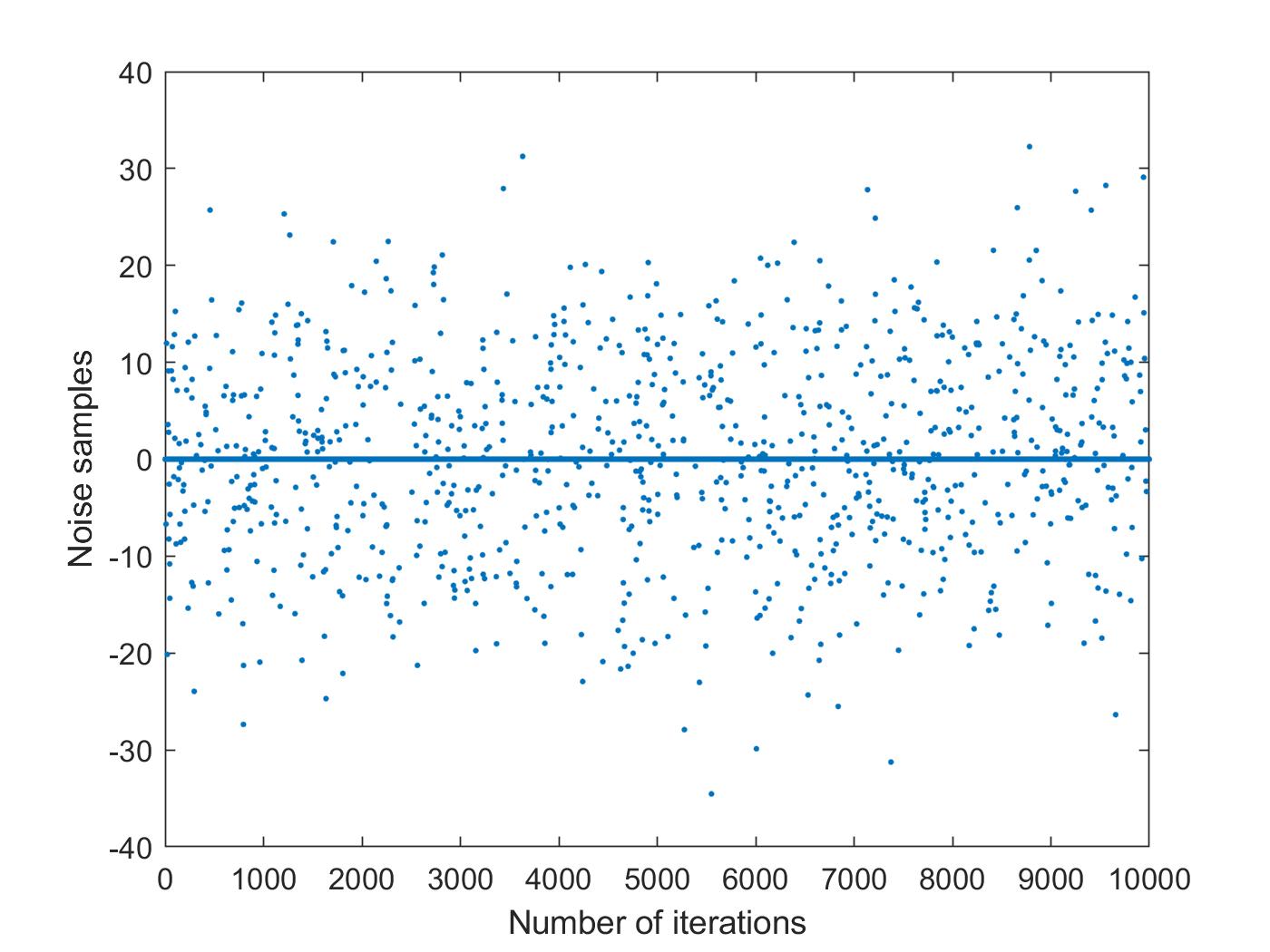} } 
\subfloat{\textbf{b.}}{
   \includegraphics[width=.4\linewidth]{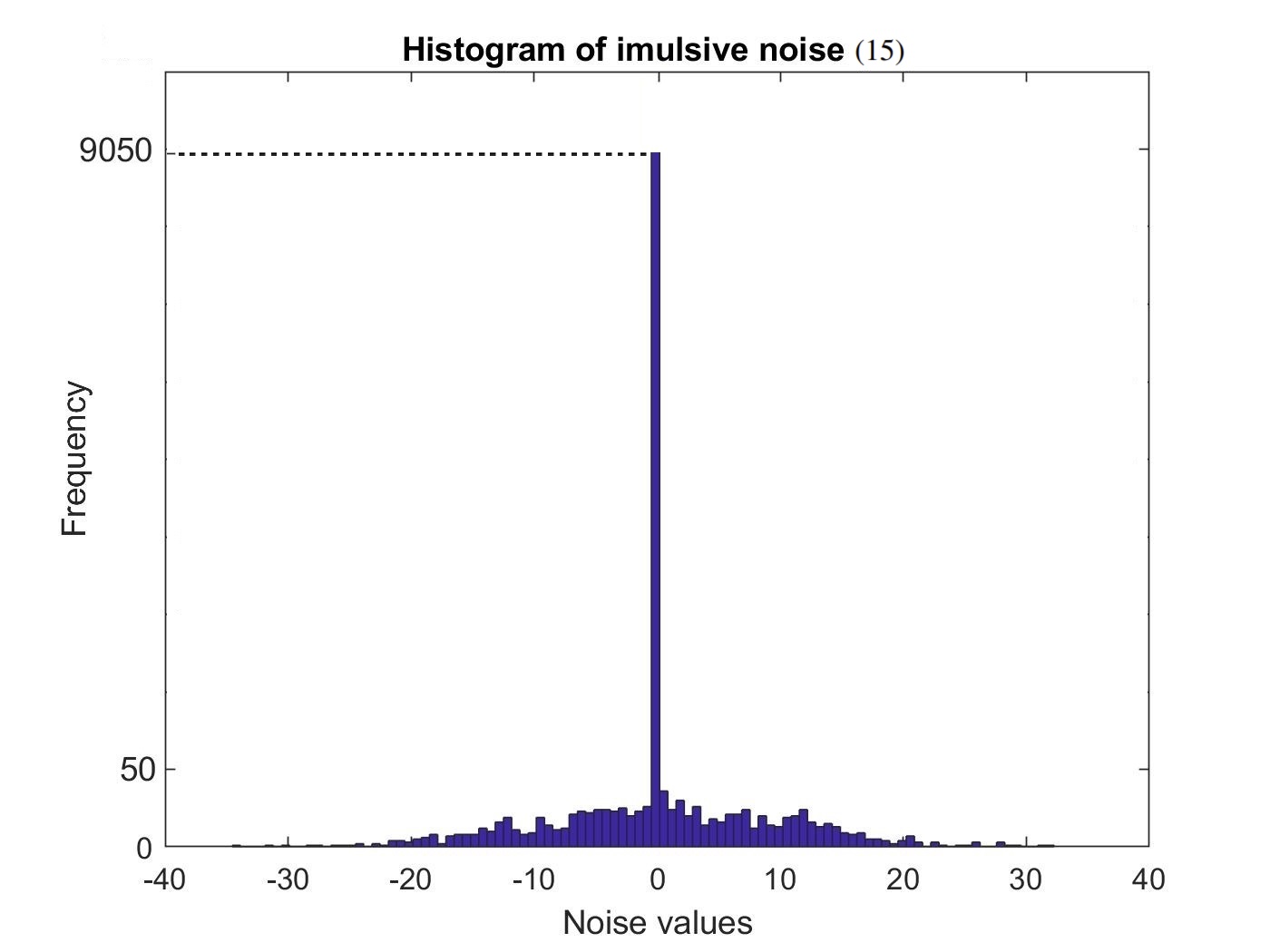} }
   
\subfloat{\textbf{c.}}{
   \includegraphics[width=.4\linewidth]{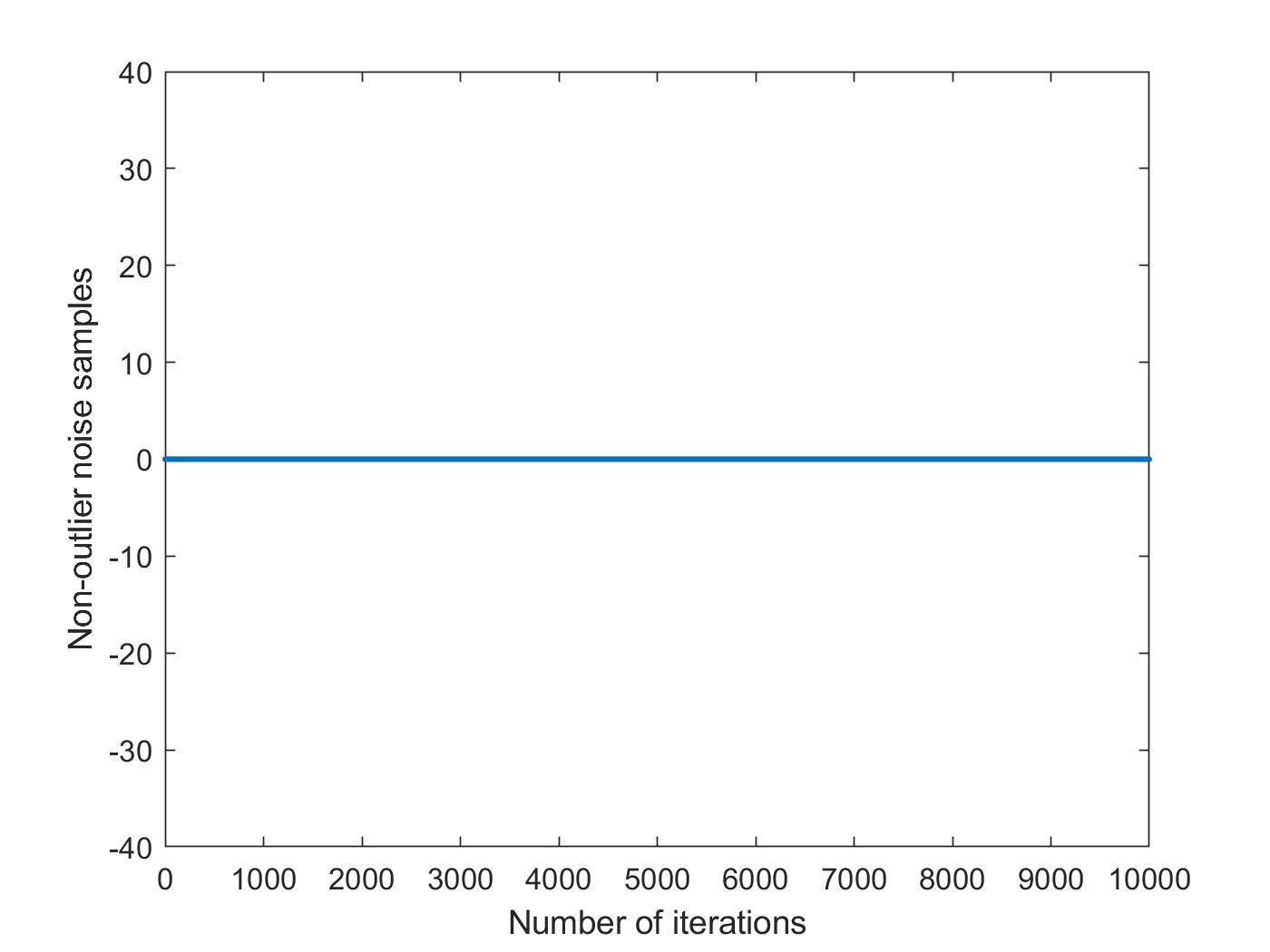} }
 \subfloat{\textbf{d.}}{
    \includegraphics[width=.4\linewidth]{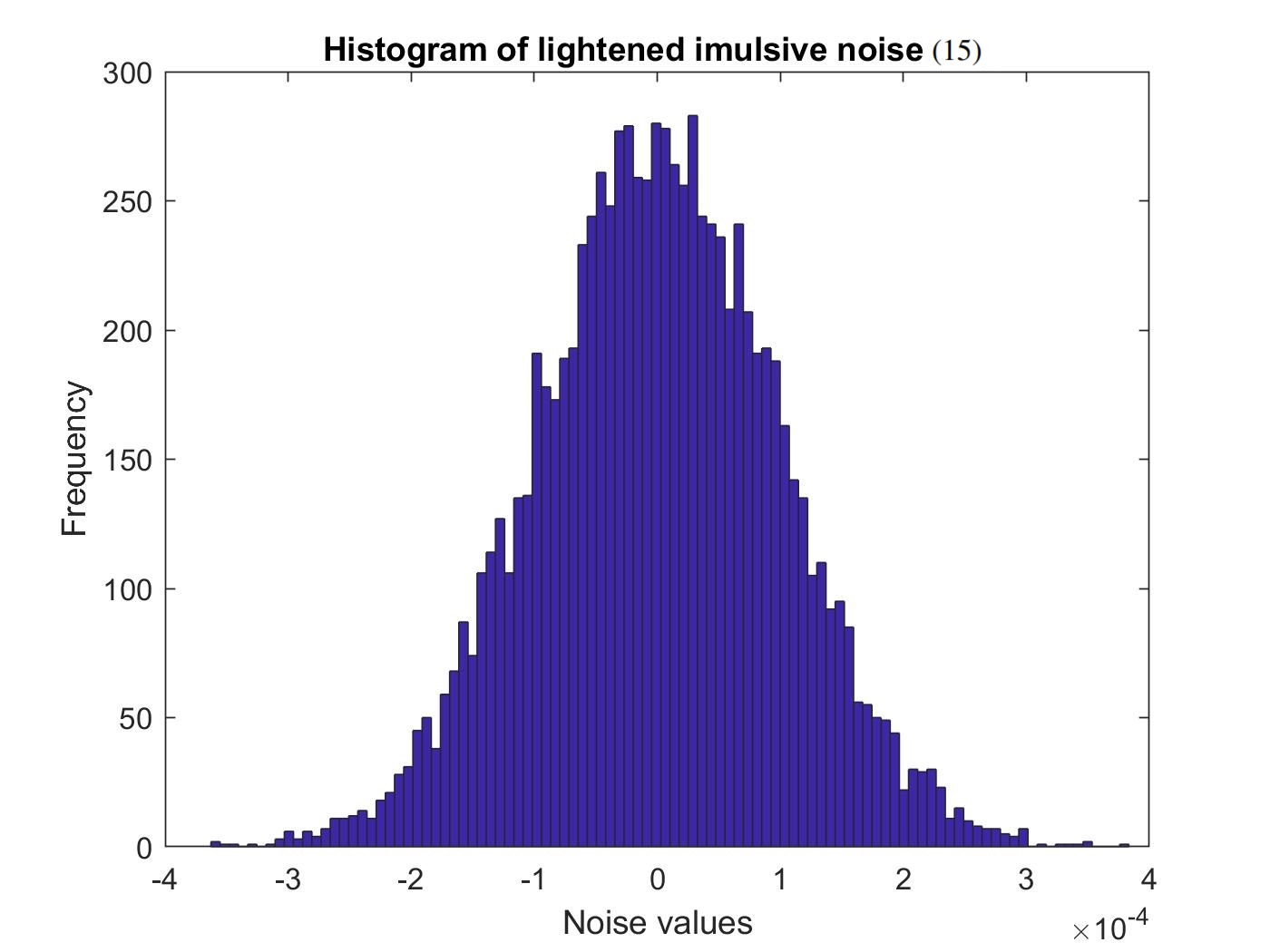} }
\caption{Scatter plot and histogram of noise samples drawn from impulsive noise (\ref{impNoi}) (Figures a. and b.) and those of noise samples drawn from lightened version of impulsive noise (\ref{impNoi}) obtained by using Algorithm 1 (Figures c. and d.).}
\label{fig12}
\end{figure*}
\begin{figure}[t!]
\centering
\includegraphics[scale=.16]{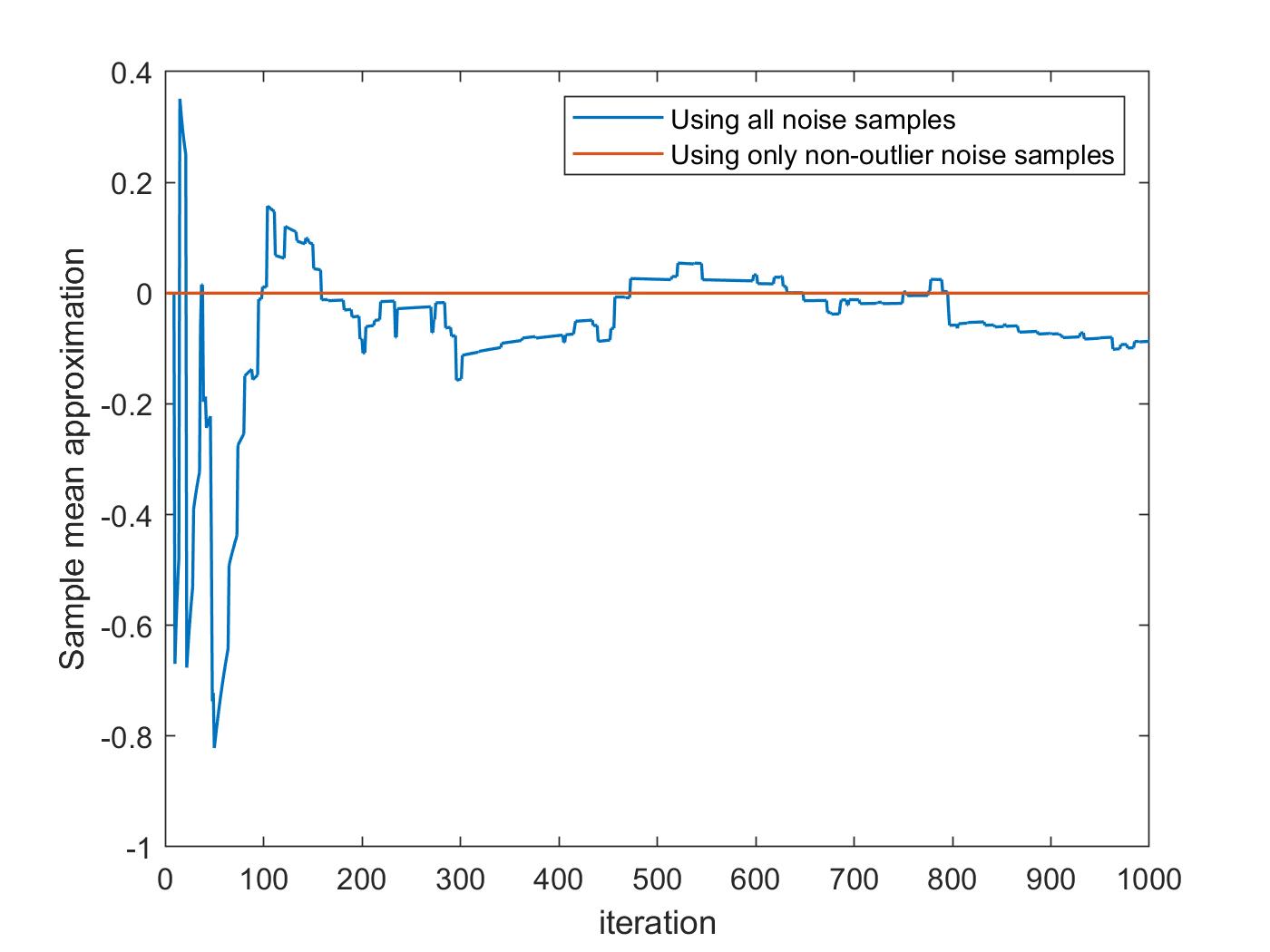}
\caption{Sample mean approximation using all noise (\ref{impNoi}) samples (including major outliers) versus using only non-outlier noise samples (detected based on Algorithm 1 and concept of the upper and lower extremes).}
\label{fig13}
\end{figure}
\subsection{Trimmed MEE}
In this subsection,  MEE and MEEF are compared with our proposed Trimmed MEE in Algorithm 2 for online linear regression in which we deploy our proposed error samples running quartile estimation technique (Algorithm 1) in order to detect and exclude major outliers (or abnormally large error samples) from learning process. Recall that for MEE and Trimmed MEE algorithms we add a bias (error sample mean) obtained from all error samples and non-outlier error samples, respectively to the output of the system in order to locate the final error samples around the origin. First, we consider exponential noises $\nu _1\sim Exp(\sqrt{2000} )$ and $\nu _2\sim Exp(\sqrt{2\times 10^{5}} )$ then Gaussian noise $\nu _3\sim \mathcal{N}(0,10^{-3}) $ which are not heavy-tailed, hence we do not need to be concerned about destructive effect of major outliers. In other words we expect to obtain similar results regardless of using MEE or Trimmed MEE for environments affected by these noises as shown in Figures \ref{fig14}-a, \ref{fig14}-b and \ref{fig14}-c, where for exponential noises Trimmed MEE even shows a slightly better steady state misalignment performance compared to MEE which is not surprising. Learning curve of MEEF also shows similar behaviour to MEE and Trimmed MEE, however for exponential noise MEEF performance deteriorates as SNR is increased inasmuch as increase in SNR means increase in $\lambda $ which results in larger gap between MEE and MCC cost functions \cite{8.5}.  
\begin{figure*}
\centering
\subfloat{\textbf{a.}}{
   \includegraphics[width=.3\linewidth]{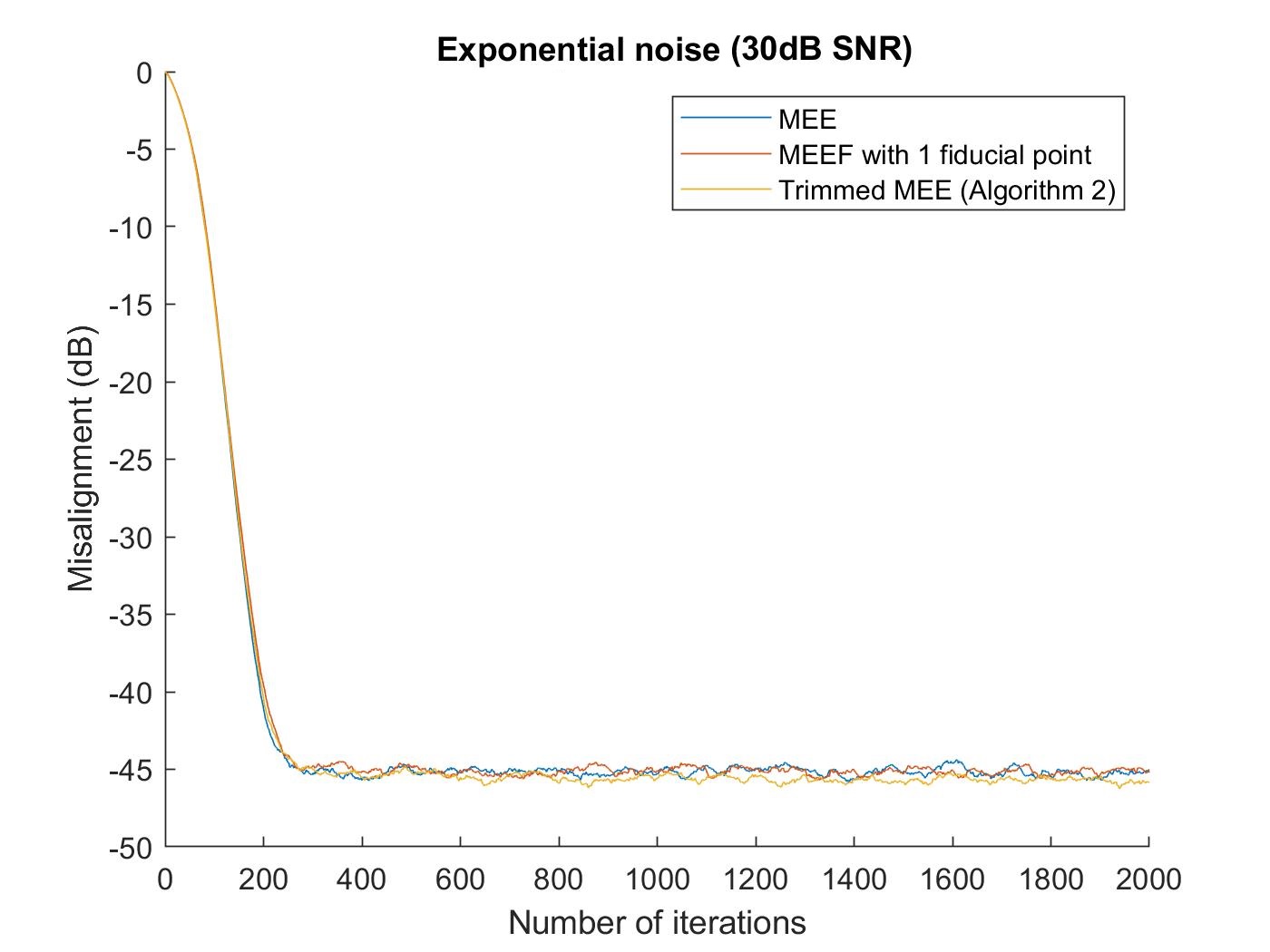} }
\subfloat{\textbf{b.}}{
  \includegraphics[width=.3\linewidth]{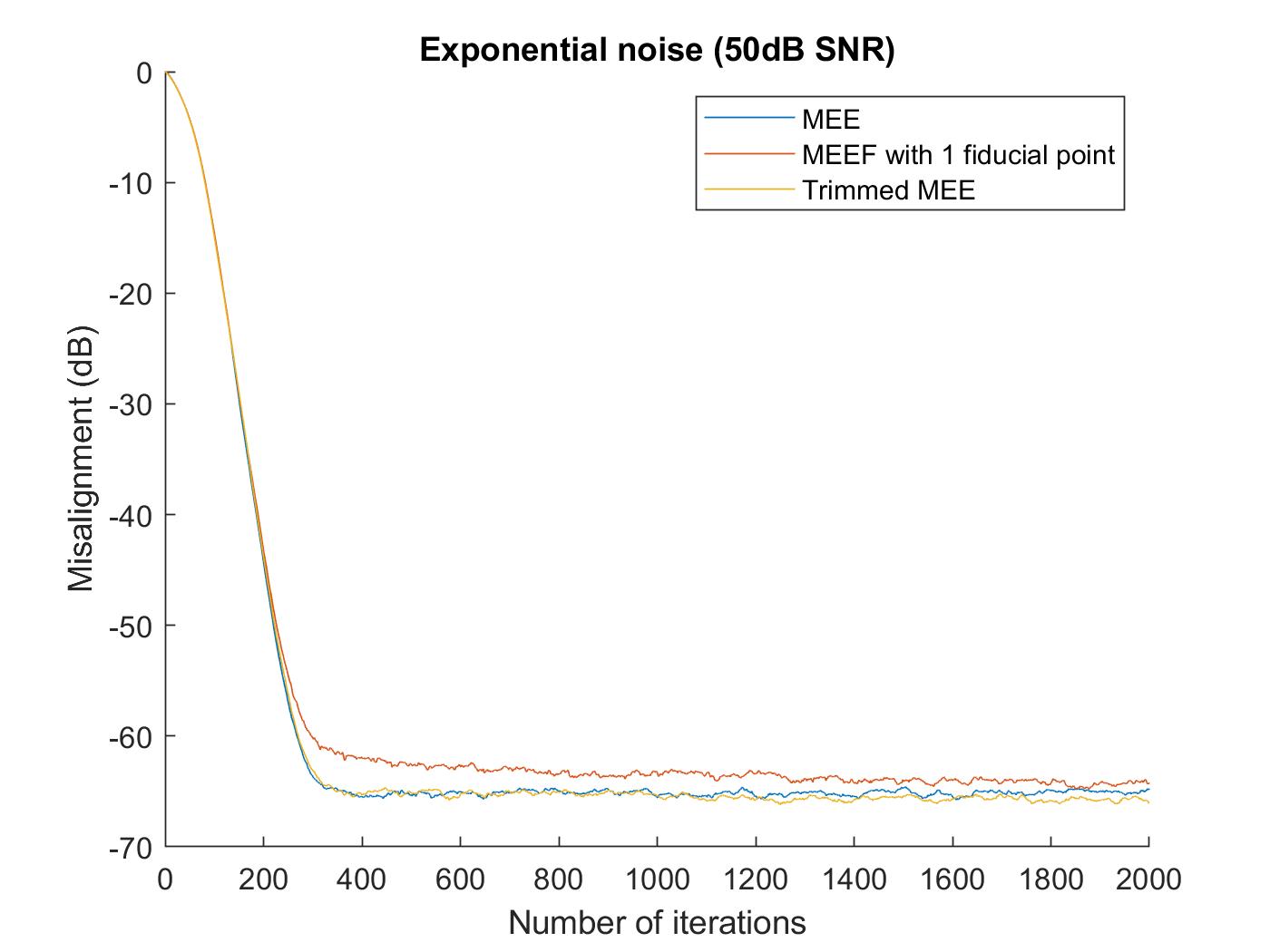} }
\subfloat{\textbf{c.}}{
  \includegraphics[width=.3\linewidth]{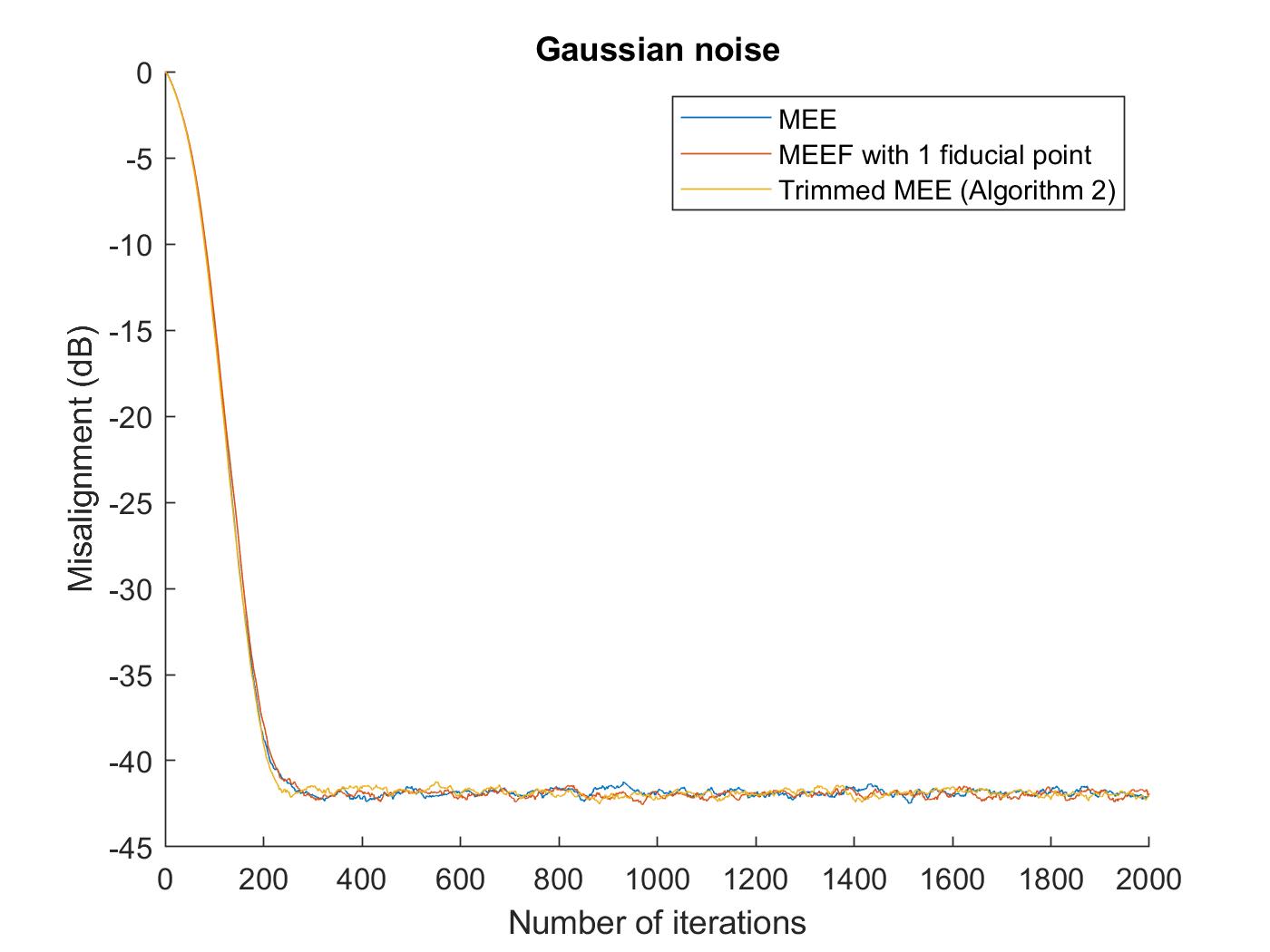} }

\subfloat{\textbf{d.}}{ 
  \includegraphics[width=.3\linewidth]{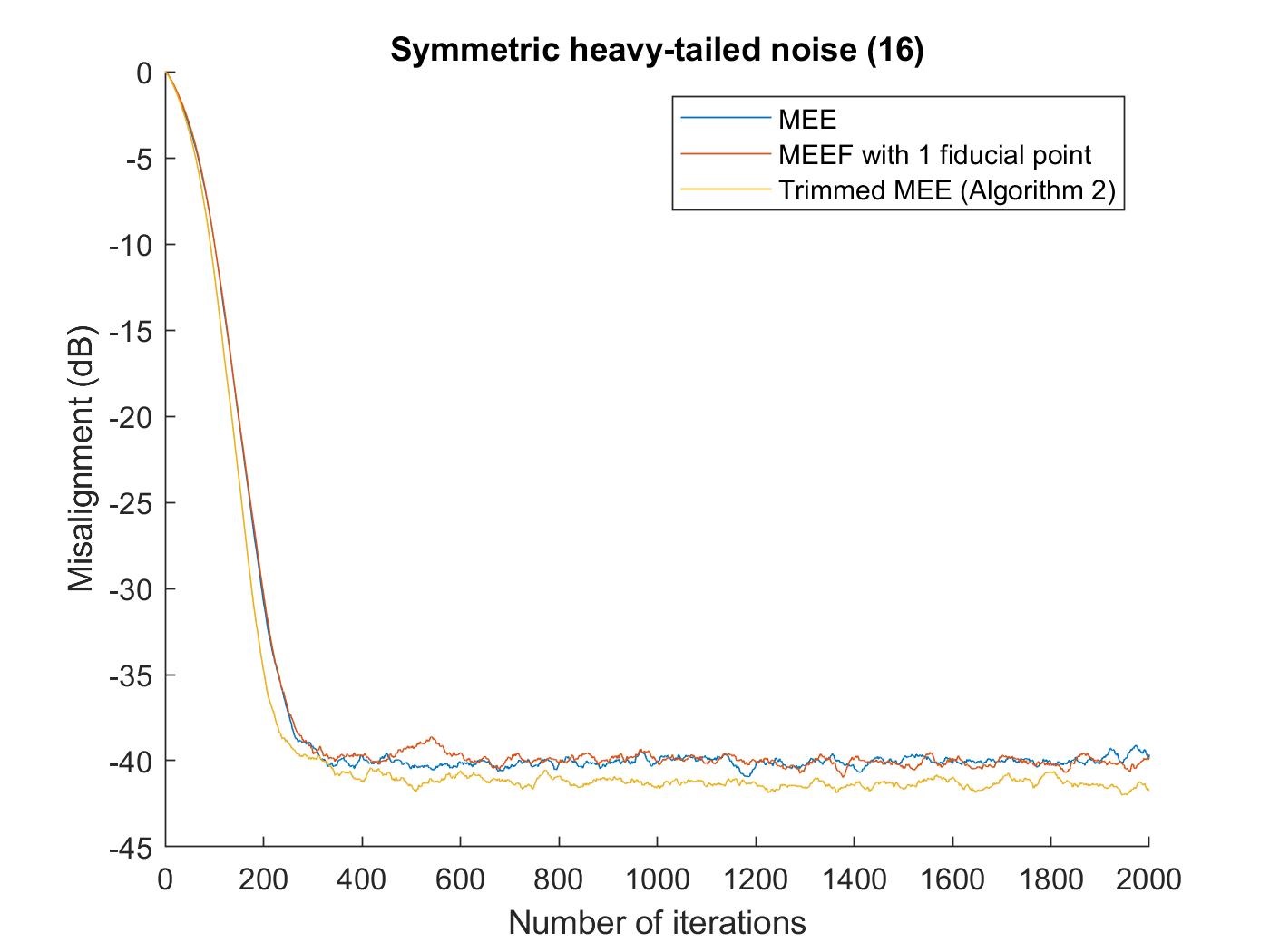} }
\subfloat{\textbf{e.}}{ 
  \includegraphics[width=.3\linewidth]{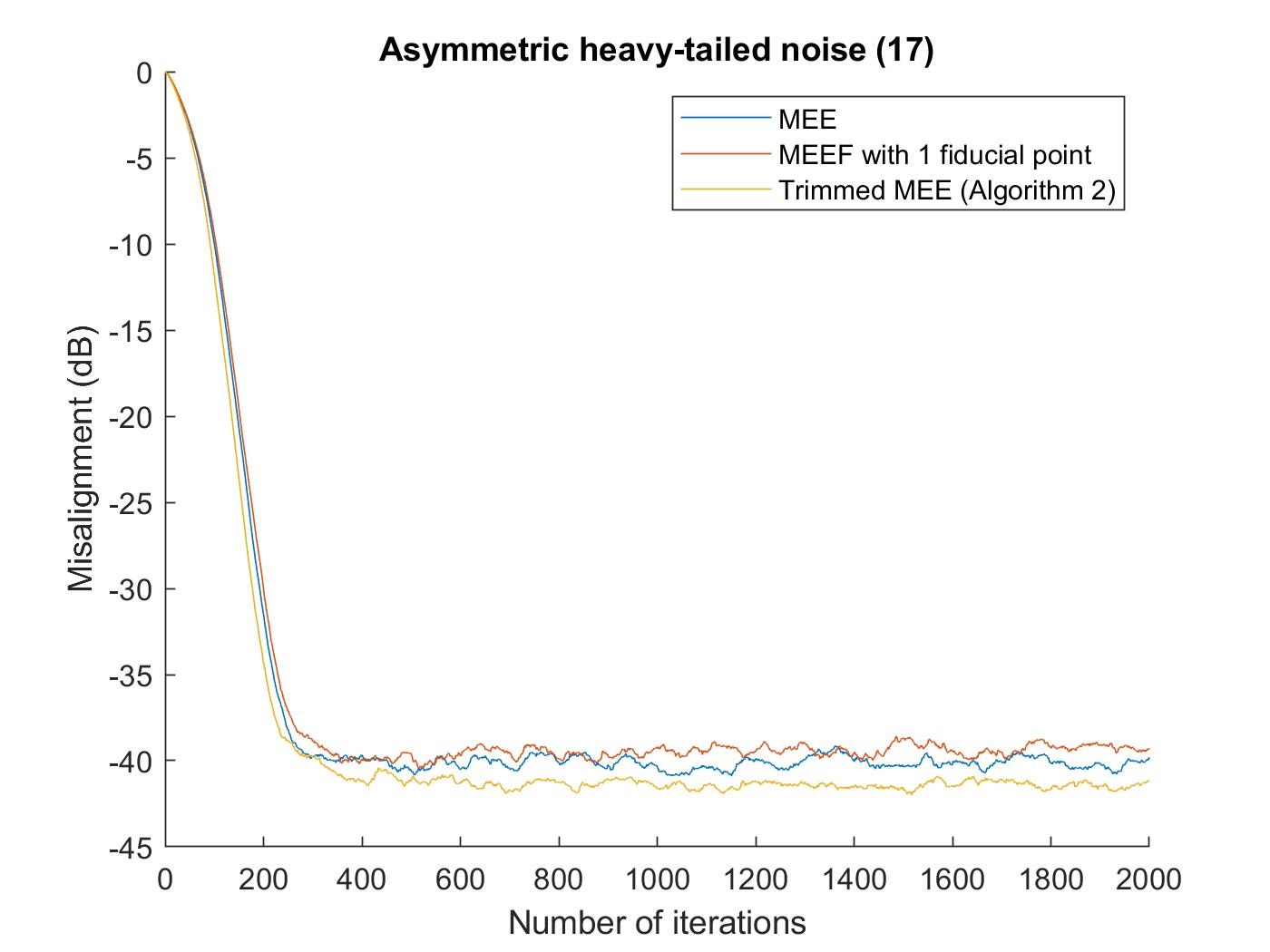} }
\subfloat{\textbf{f.}}{ 
  \includegraphics[width=.3\linewidth]{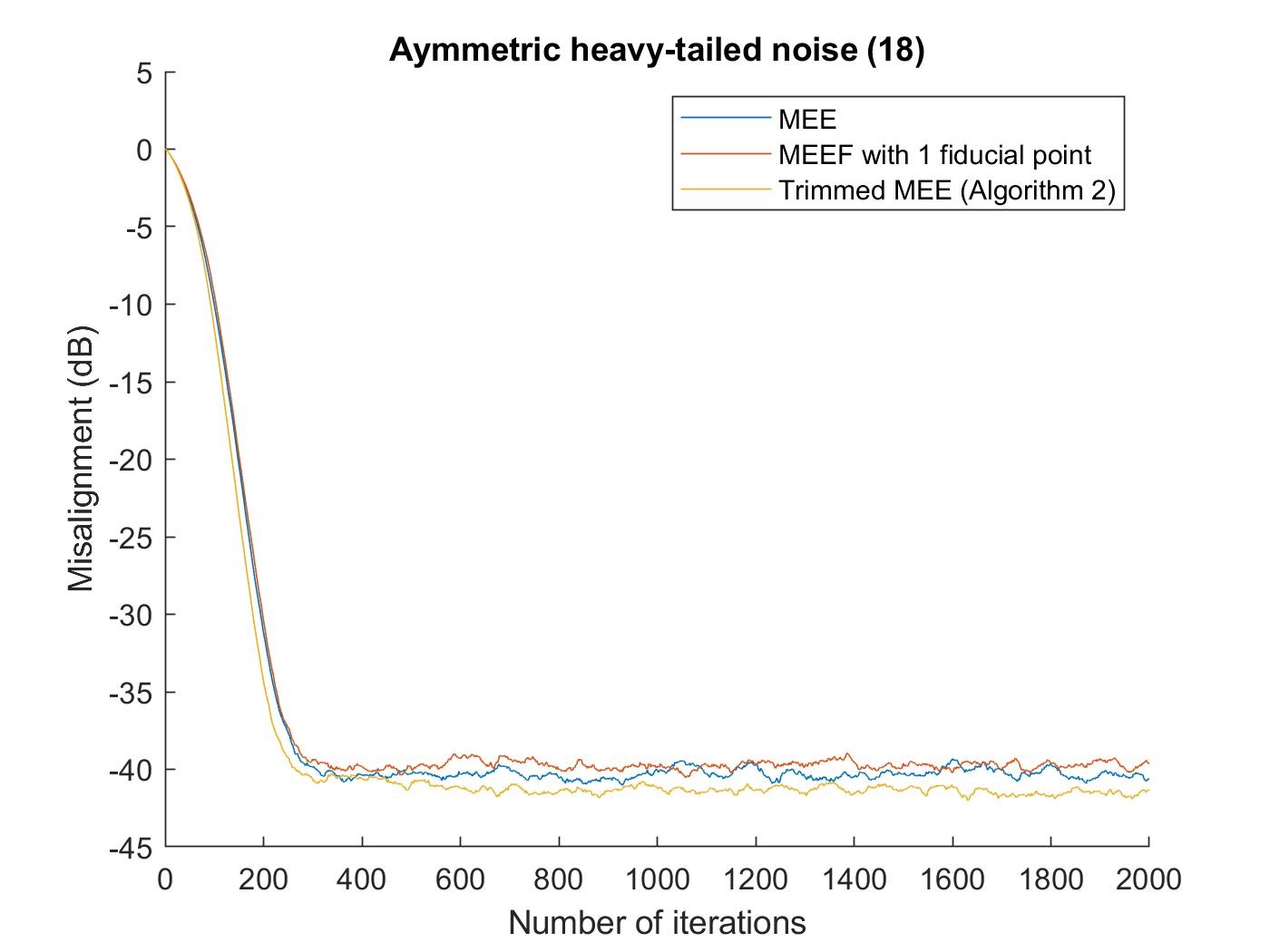} }
\caption{Learning curves of MEE, MEEF and Trimmed MEE under different noises averaged over 200 Monte Carlo simulations with $\mu = 0.05$ and $\sigma =1$.}
\label{fig14}
\end{figure*}

In order to show superiority of Trimmed MEE over MEE and MEEF in presence of heavy-tailed noises, we consider following symmetric and asymmetric mixture of Gaussians noises:
\begin{align}
\label{symNoi}
&\nu _1 \sim 0.9\mathcal{N}(0,10^{-3}) + 0.1\mathcal{N}(0,1000),  \\
\label{asymNoi}
&\nu _2 \sim 0.9\mathcal{N}(0,10^{-3}) + 0.1\mathcal{N}(10,1000).  \\
\label{asymNoi2}
&\nu _3 \sim 0.9\mathcal{N}(-5,10^{-3}) + 0.1\mathcal{N}(10,1000).  
\end{align}
As seen in Figures \ref{fig14}-d, \ref{fig14}-e and \ref{fig14}-f, for all symmetric and asymmetric heavy-tailed noises (\ref{symNoi}), (\ref{asymNoi}) and (\ref{asymNoi2}) Trimmed MEE outperforms MEE and MEEF from both convergence rate and steady state misalignment point of views. Learning curves of MEE and MEEF are similar, however for asymmetric noises (\ref{asymNoi}) and (\ref{asymNoi2}) MEE shows a slightly better performance. It is worth mentioning that all learning curves of Figure \ref{fig14} are obtained by averaging over 200 independent Monte Carlo simulations. The claimed strength of MEEF in literature is its ability to locate the majority of errors between system outputs and labels around the origin which becomes more clear when we do error analysis. However, this claim is valid as long as the gap between MEE and MCC cost functions is not large. This fact is shown in Table I where testing error analysis of these three algorithms (MEE, MEEF with 1 fiducial point and Trimmed MEE) is done for noises discussed in Figure \ref{fig14}. The error analysis is done based on the following metric called mean absolute error (MAE):
\begin{align}
\mathrm{MAE} =\frac{1}{T}\sum _{t=1}^T |d_t-y_t|=\frac{1}{T}\sum _{t=1}^T |e_t|,
\nonumber 
\end{align}
where $T$ is the number of testing samples. Again, we ran 200 independent Monte Carlo simulations, each with 2000 iterations ($\mu =0.05,~\sigma =1,~M=100$ and $\epsilon = 0.01$), learned the system parameters in each simulation and then obtained testing errors based on 2000 testing samples for each simulation, therefore each MAE shown in Table I is obtained by averaging over 200 MAEs. The smallest MAE amongst all three algorithms for each noise is highlighted in bold. As seen in this Table, for all noises Trimmed MEE gives the best result. More precisely, for exponential noise MEE and Trimmed MEE outperforms MEEF, as expected because for this noise gap between MEE and MCC cost functions is large. For Gaussian noise the performance of all algorithms is the same. For mixture of two Gaussians noises, which are heavy-tailed, as long as the gap between MEE and MCC is not large MEEF outperforms MEE, although Trimmed MEE still shows the best performance. However, once this noise makes a large gap between MEE and MCC cost functions, as noise (\ref{asymNoi2}) does, MEEF shows a weaker performance even than MEE while Trimmed MEE shows the best performance again.

\begin{table*}[!t]
\begin{center}
\caption{Testing Mean Absolute Errors of Different Algorithms under Different Noises}
\begin{tabular}{ c c c c }
\toprule
&     MEE     &     MEEF     &     Trimmed MEE     \\
\midrule
Exponential noise with 30dB SNR & 0.0171$\pm $0.0012 &0.0232$\pm $0.0016 &\textbf{0.0169}$\pm $\textbf{0.0010} \\
\midrule
Exponential noise with 50dB SNR & \textbf{0.0017}$\pm $\textbf{0.0001} & 0.0023$\pm $0.0002 & \textbf{0.0017}$\pm $\textbf{0.0001}\\
\midrule
Gaussian noise with 30dB SNR & 0.0262$\pm $0.0016 & 0.0262$\pm $0.0016 & 0.0262$\pm $0.0016 \\
\midrule
Mixture of Gaussians noise (16) & 2.6669$\pm $0.4625 &2.5824$\pm $0.4206 & \textbf{2.5485}$\pm $\textbf{0.4043}\\
\midrule
Mixture of Gaussians noise (17) & 3.4912$\pm $0.5754 &2.6762$\pm $0.4812 & \textbf{2.6469}$\pm $\textbf{0.4272}\\
\midrule
Mixture of Gaussians noise (18) & 4.0958$\pm $0.6527 &7.1222$\pm $0.4281 & \textbf{2.8526}$\pm $\textbf{0.5376} \\
\bottomrule
\end{tabular}
\end{center}
\end{table*}
\begin{figure*}[!t]
\centering
\subfloat{\textbf{a.}}{
   \includegraphics[width=.3\linewidth]{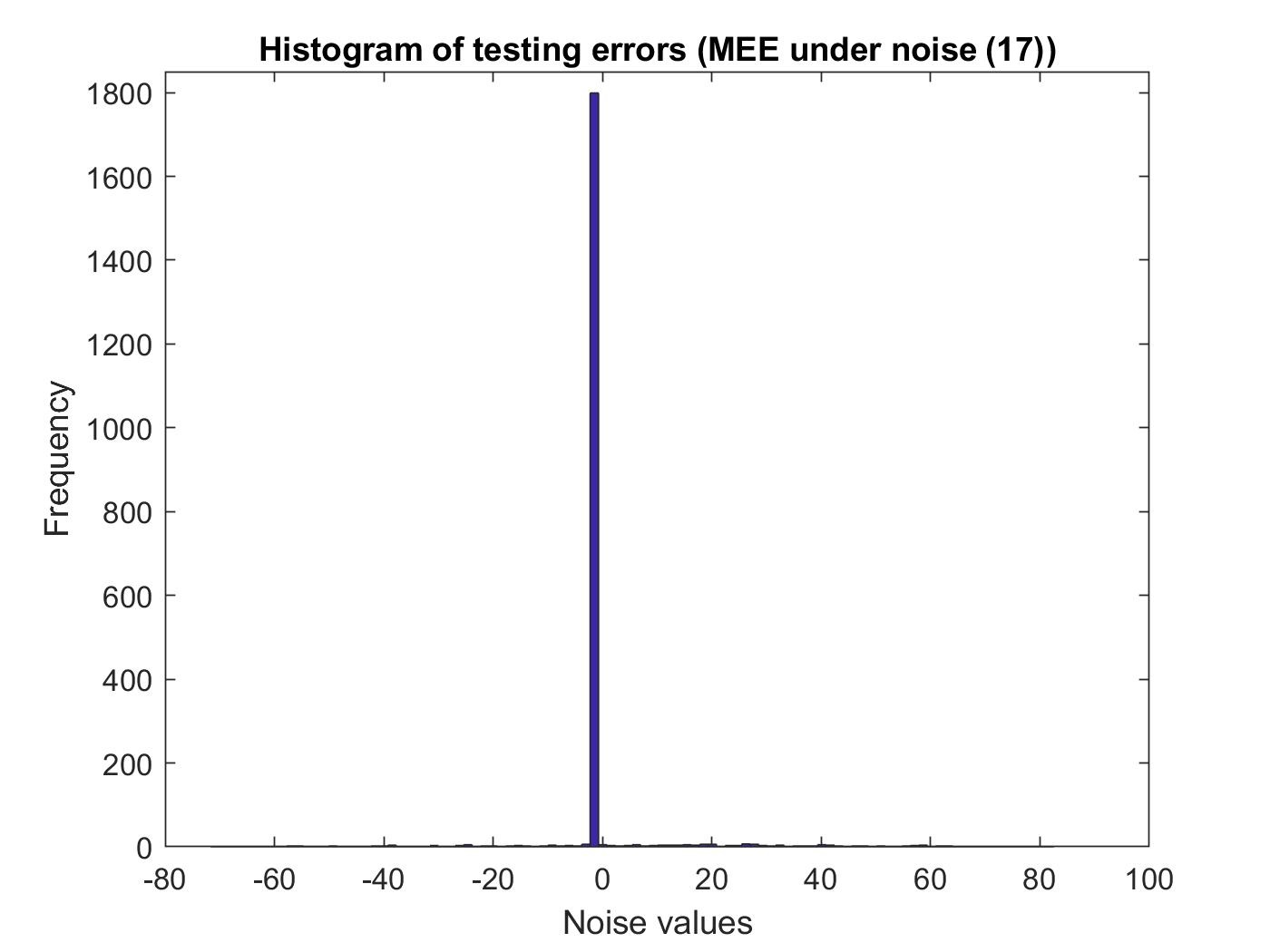} }
\subfloat{\textbf{b.}}{ 
   \includegraphics[width=.3\linewidth]{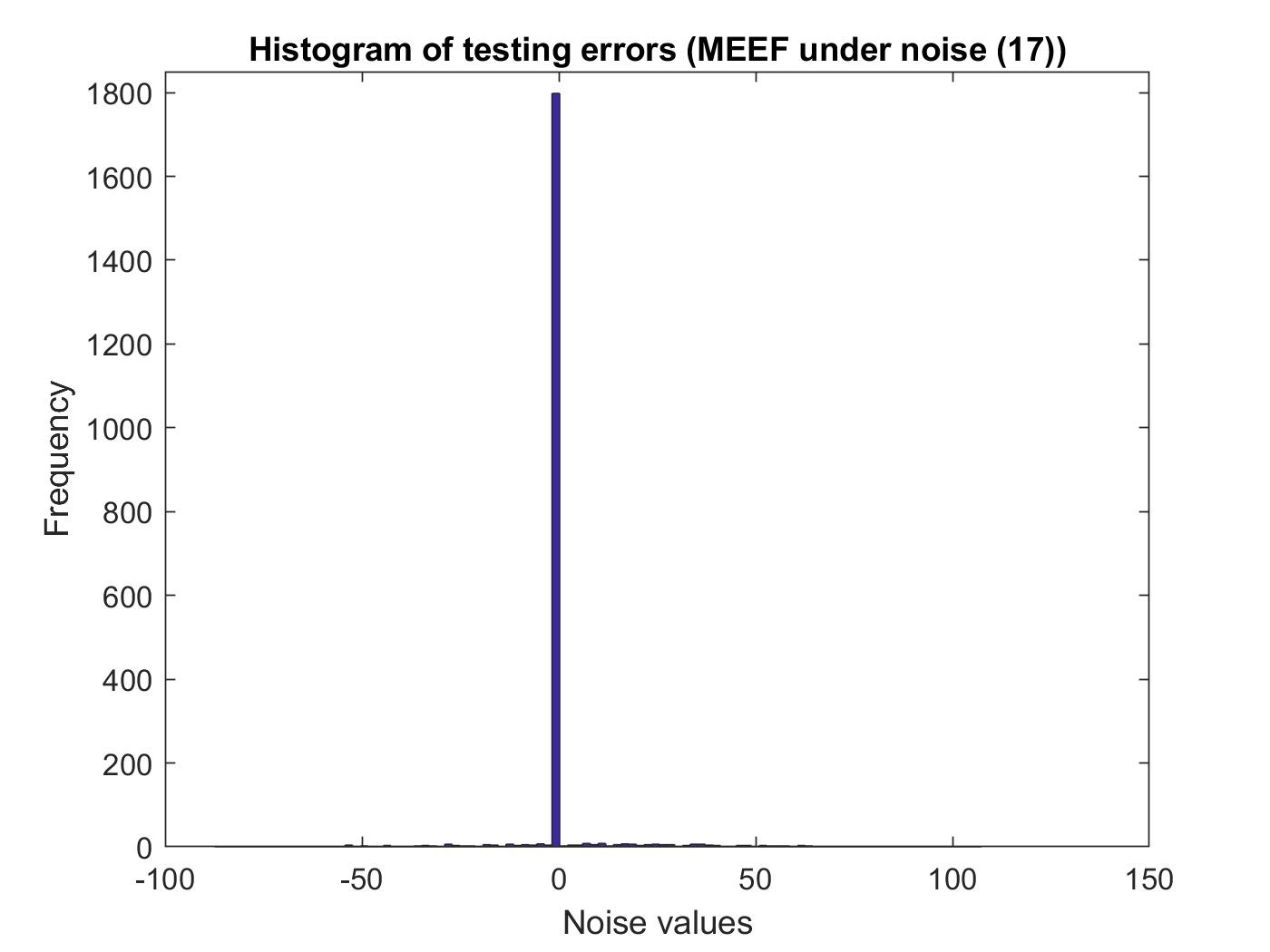} }
\subfloat{\textbf{c.}}{ 
  \includegraphics[width=.3\linewidth]{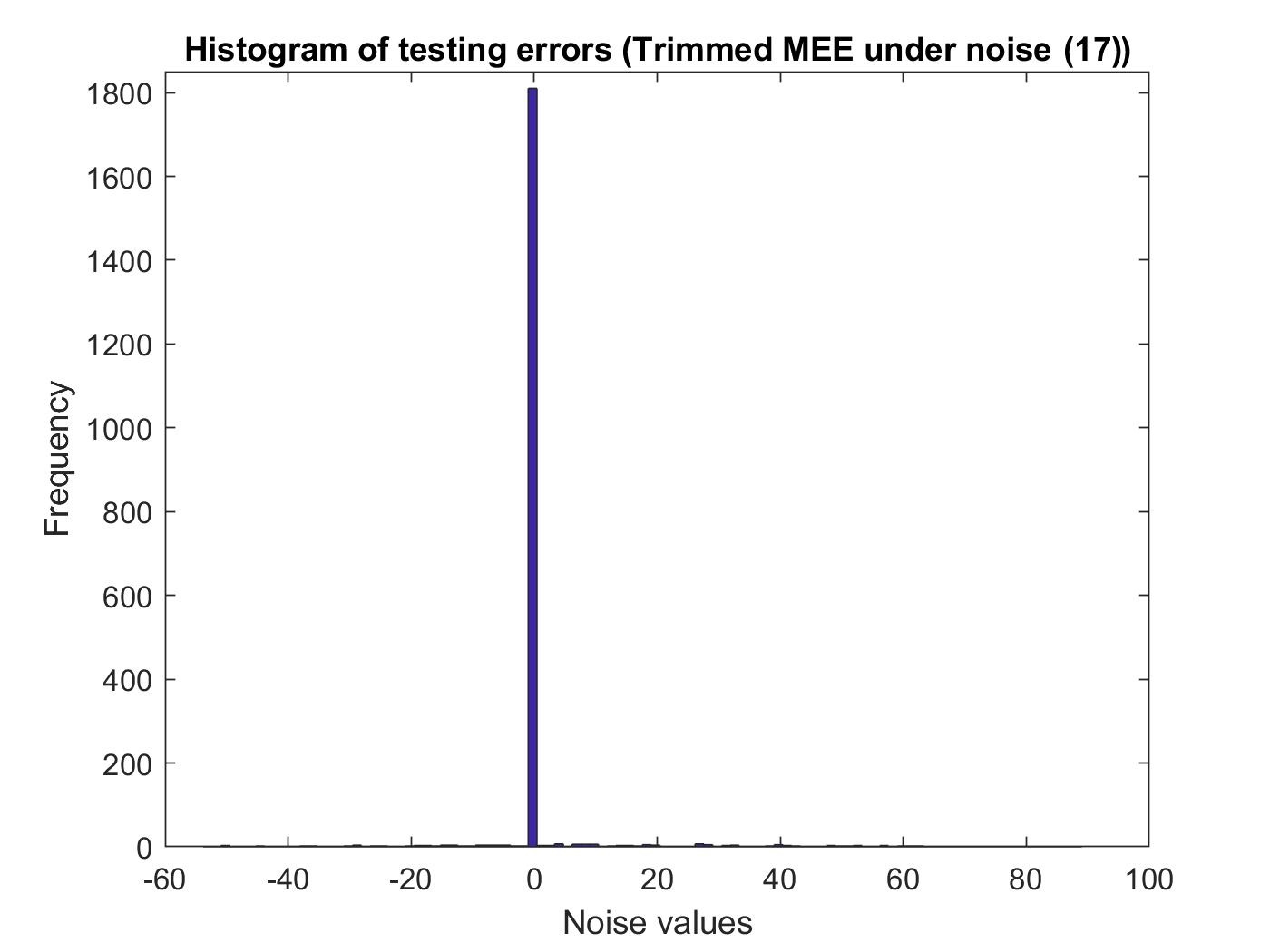} }

\subfloat{\textbf{d.}}{
  \includegraphics[width=.3\linewidth]{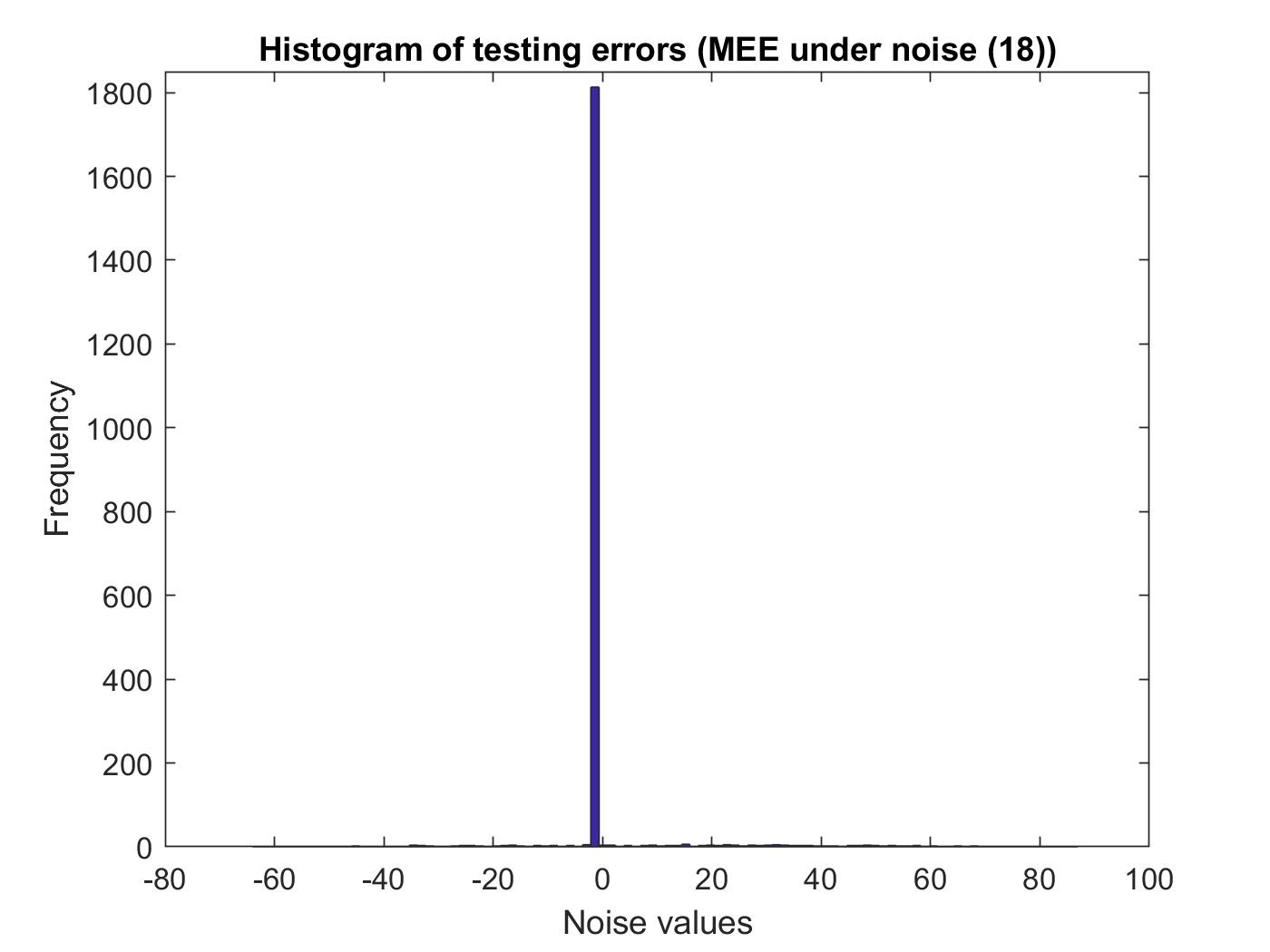} }
\subfloat{\textbf{e.}}{
  \includegraphics[width=.3\linewidth]{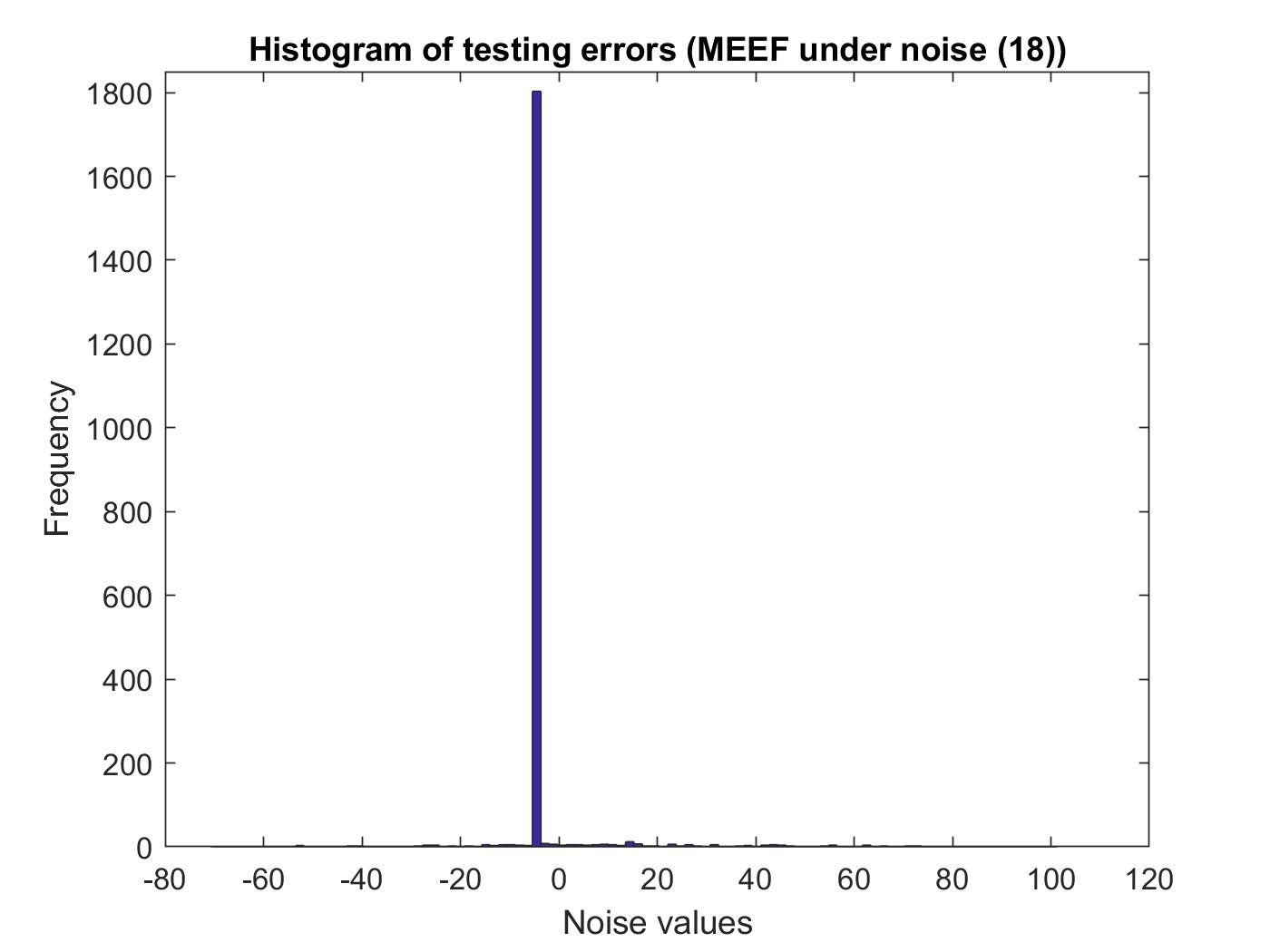} }
\subfloat{\textbf{f.}}{  \includegraphics[width=.3\linewidth]{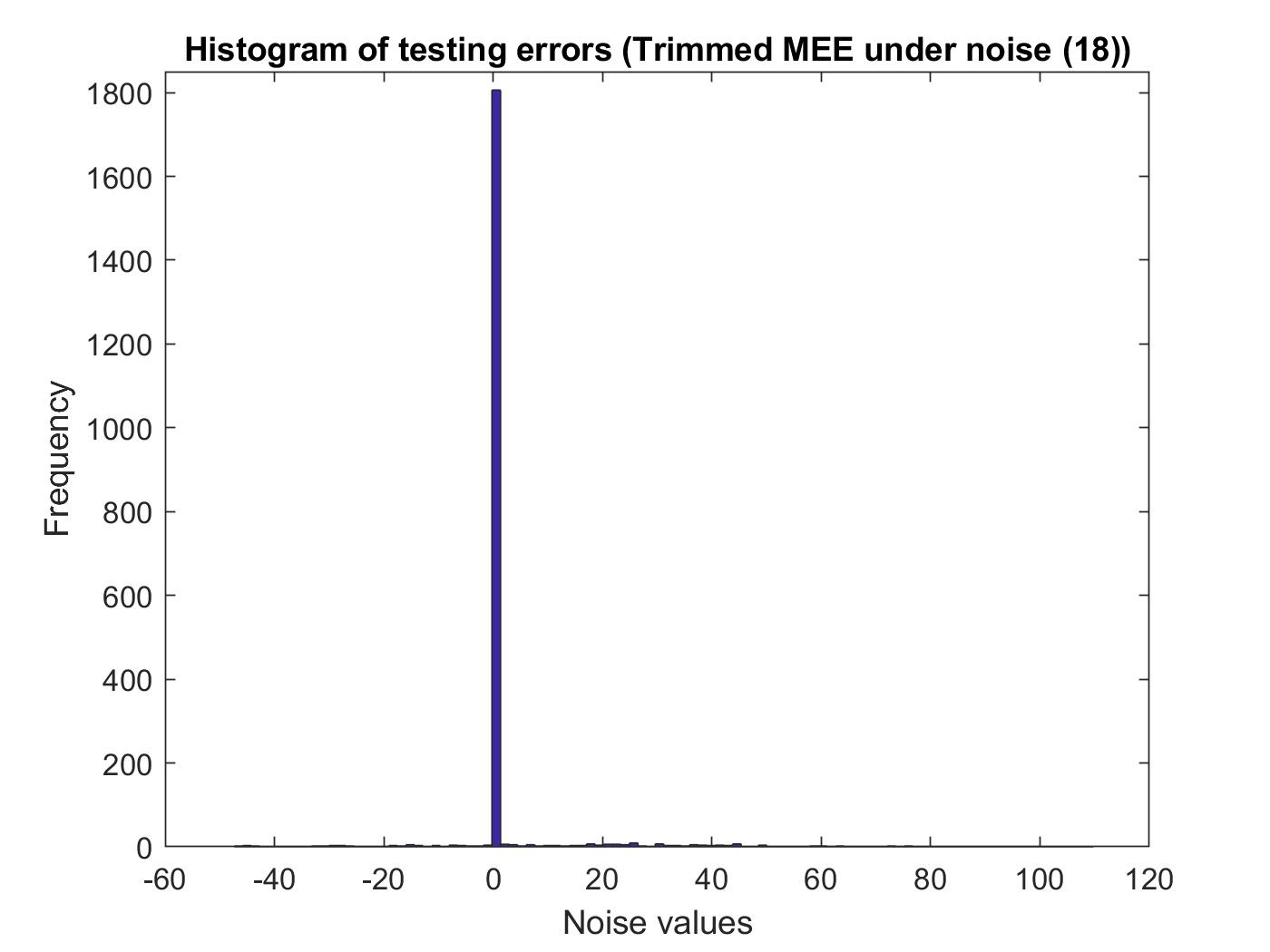} }
\caption{Testing error histograms of MEE, MEEF and Trimmed MEE obtained based on 2000 testing samples under Mixture of two Gaussians noises (\ref{asymNoi}) and (\ref{asymNoi2}).}
\label{fig15}
\end{figure*}
Figure \ref{fig15} shows the testing error histograms (obtained based on 2000 testing samples) for these three algorithms under noises (\ref{asymNoi}) and (\ref{asymNoi2}). As shown in this Figure, we can see again superiority of our algorithm where mass of the testing errors (approximately 90\% of them) is closer to the origin in Trimmed MEE compared to MEE and MEEF. Note that for noise (\ref{asymNoi2}), which makes large gap between MEE and MCC cost functions, MEEF even performs worse than MEE.

\section{Conclusion}
In this paper we address robust online linear regression in the presence of non-Gaussian noises (especially heavy-tailed noises). We can name many applications for online linear regression such as channel estimation and equalization, active noise cancellation, etc. Error entropy as a robust cost function has been utilized for robust learning under non-Gaussianity, however since entropy is shift-invariant we need to take some extra steps to locate the error PDF around the origin. To this end, two methods has been proposed nevertheless we show the shortcomings of these methods in this paper and propose our method. First, we propose an online algorithm in order to find the running quartiles of the error samples, then we use them to detect and eliminate major outliers from learning procedure based on MEE. We call our proposed method Trimmed MEE. Simulation results show the robustness of our algorithm to non-Gaussian (especially heavy-tailed) noises and its superiority over known methods in locating error PDF around the origin. In more details, proposed algorithm results in a learning curve with faster convergence to lower steady state misalignment and also achieves lower testing error compared to other algorithms. 

\printbibliography

\end{document}